\theoremstyle{plain}
\newtheorem{theorem}{Theorem}[section]
\newtheorem{lemma}[theorem]{Lemma}
\newtheorem{proposition}[theorem]{Proposition}
\newtheorem{corollary}[theorem]{Corollary}
\theoremstyle{remark}
\newtheorem{definition}[theorem]{Definition}
\newtheorem{remark}[theorem]{Remark}
\newcommand{\build}[3]{\mathrel{\mathop{\kern 0pt#1}\limits_{#2}^{#3}}}
\def\SU{{\mathrm{SU}}}
\def\U{{\mathrm U}}
\def\Z{{\mathbb Z}}
\def\N{{\mathbb N}}
\def\R{\mathbb{R}}
\def\C{\mathbb{C}}
\def\Tr{\mathrm{Tr}}
\def\vol{\mathrm{vol}}
\def\Pbb{\mathbb{P}}
\def\Pfr{\mathfrak{Y}}
\def\E{\mathbb{E}}
\def\Gfr{\mathscr{G}}
\def\Ufr{\mathscr{U}}
\def\F{\mathcal{F}}
\def\chir{\mathrm{chir}}
\def\Ffr{\mathfrak{F}}
\def\PH{\mathfrak{M}}
\def\YM{\mathrm{YM}}
\def\Hom{\mathrm{Hom}}
\def\geq{\geqslant}
\def\leq{\leqslant}
\def\ge{\geq}
\def\le{\leq}
\begin{document}

\begin{frontmatter}
\title{Gaussian measure on the dual of $\mathrm{U}(N)$, random partitions, and topological expansion of the partition function}
\runtitle{Gaussian measure on the dual of $\mathrm{U}(N)$}

\begin{aug}
\author[A]{\fnms{Thibaut}~\snm{Lemoine}\ead[label=e1]{thibaut.lemoine@college-de-france.fr}},
\and
\author[B]{\fnms{Myl\`ene}~\snm{Ma\"ida}\ead[label=e2]{mylene.maida@univ-lille.fr}}
\address[A]{Coll\`ege de France, 3, rue d'Ulm, 75005 Paris, France\printead[presep={,\ }]{e1}}

\address[B]{Universit\'e de Lille, CNRS, UMR 8524 -- Laboratoire Paul Painlev\'e, F-59000 Lille, France\printead[presep={,\ }]{e2}}
\end{aug}

\begin{abstract}
We study a Gaussian measure $\Gfr_N(q)$ with parameter $q\in(0,1)$ on the dual of the unitary group of size $N$: we prove that a random highest weight under $\Gfr_N(q)$ is the coupling of two independent $q$-uniform random partitions $\alpha,\beta$ and a random integer $n\sim\Gfr_1(q)$. We prove deviation inequalities for the $q$-uniform measure, and use them to show that the coupling of random partitions under $\Gfr_N(q)$ vanishes in the limit $N\to\infty$. We also prove that the partition function of this measure admits an asymptotic expansion in powers of $1/N$, and that this expansion is topological, in the sense that its coefficients are related to the enumeration of ramified coverings of elliptic curves. It provides a rigorous proof of the gauge/string duality for the Yang-Mills theory on a 2D torus with gauge group $\U(N),$ advocated by Gross and Taylor \cite{GT,GT2}.
\end{abstract}

\begin{keyword}[class=MSC]
\kwd[Primary ]{43A75}
\kwd{60B15}
\kwd{05A17}
\kwd[; secondary ]{81T13}
\kwd{81T35}
\end{keyword}

\begin{keyword}
\kwd{Asymptotic representation theory}
\kwd{random partitions}
\kwd{two-dimensional Yang--Mills theory}
\kwd{topological expansion}
\end{keyword}

\end{frontmatter}


\section{Introduction}

There is a longstanding study of random discrete models involving algebraic or combinatorial structures with  growing size, in particular distributions on Young diagrams. Multiple motivations hide behind these distributions: understanding the eigenvalue distribution of random matrices \cite{BOO}, estimating matrix integrals in quantum field theory \cite{NO,Oko3}, describing the representations of infinite unitary or symmetric groups \cite{BO05,BK08}, or even analysing the structure of integrable models arising from statistical physics \cite{Zyg}.

In the present paper, we are mainly interested in measures on the set $\widehat{\U}(N)$ of highest weights of the unitary group $\U(N)$. 
It corresponds to the set of nonincreasing $N$-tuples of integers $\lambda=(\lambda_1,\ldots,\lambda_N),$ labelling the irreducible representations of $\U(N).$  For any $q\in (0,1),$ we denote by $\Gfr_N(q)$ the distribution on $\widehat{\U}(N)$ given by
\[
\Pbb(\lambda = \mu) \propto q^{c_2(\mu) },\quad \forall \mu\in\widehat{\U}(N),
\]
where $c_2(\lambda)$ is the \emph{Casimir number} of the representation $\lambda$, given by 
\[
c_2(\lambda) = \frac1N \sum_{i=1}^N \lambda_i(\lambda_i+N+1-2i),
\]
which is (up to a sign) an eigenvalue of the Laplace operator on $\U(N)$.

Writing $q= e^{- \beta}$ for some inverse temperature $\beta >0$ makes explicit the fact that  $\Gfr_N(q)$ is a Gibbs measure of the form 
\[
\Pbb(\lambda=\mu) \propto e^{-\beta H(\mu)},\quad \forall \mu\in\widehat{\U}(N),
\]
where the Hamiltonian $H:\widehat{\U}(N)\to\R$ is the function that assigns to any highest weight its Casimir number. As $c_2$ is a quadratic form in the coordinates of $\lambda,$ it is even a (discrete) Gaussian measure. Gibbs measures of the same kind also appear in the character expansion of some unitary matrix models \cite{GM}.

The main quantity that we will investigate is the \emph{partition function} of the model, which is given by
\[
Z_N(q) = \sum_{\lambda\in\widehat{\U}(N)} q^{c_2(\lambda)}.
\]

In the case of $\widehat{\U}(1)=\Z,$ the partition function is nothing but the Jacobi theta function
\[
\theta(q) = \sum_{n\in\Z} q^{n^2},
\]
which is analytic on the unit disk $\{q\in\C:\vert q\vert <1\}$. We will see that the measure $\Gfr_N(q)$ actually corresponds to a coupling of two random partitions by means of a $\Gfr_1(q)$-distributed integer.

A natural question, addressed in many earlier works for measures on integer partitions -- see \cite{LS,VK,Ver,BOO,Oko,Rom2,CLW} and references therein -- is to understand the limiting distribution of the model. We want to treat a similar question, but for highest weights instead of partitions.

\subsection{Coupling of random partitions}

The highest weights of $\U(N)$ can be related to triples $(\alpha,\beta,n)$ where $\alpha,\beta$ are integer partitions and $n$ is an integer. We denote by $\Pfr$  the set of all integer partitions and for any $\alpha=(\alpha_1\geq\cdots\geq\alpha_r>0) \in \Pfr$, we denote by $\ell(\alpha)=r$ its number of parts. As pointed out in the literature \cite{Lit1,Lit2,Sta,Koi}, given two partitions $\alpha$ and $\beta$ and an integer $n$, for any $N\geq\ell(\alpha)+\ell(\beta)$, one can build a highest weight $\lambda_N(\alpha,\beta,n)$ of $\U(N)$ according to scheme described in Figure \ref{fig:decomp_highest_weight}.

\begin{figure}[b]
    \centering
    \includegraphics{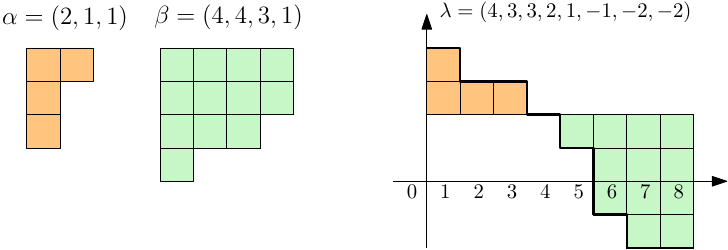}
    \caption{Construction of the highest weight $\lambda_N(\alpha,\beta,n)$ for $N=8$, $\alpha=(2,1,1)$, $\beta=(4,4,3,1)$ and $n=2$.}
    \label{fig:decomp_highest_weight}
\end{figure}

It turns out that, for any $N \ge 1,$ if we add the restriction that the triple $(\alpha,\beta,n)$ should belong to 
\[
\Lambda_N = \{(\alpha,\beta,n)\in\Pfr^2\times\Z:\ell(\alpha)\leq A_N,\ell(\beta)\leq B_N\},
\]
with 
\begin{equation} \label{eq:ANBN}
A_N := \lfloor (N+1)/2\rfloor-1 \textrm{ and } B_N :=  N-\lfloor(N+1)/2\rfloor, 
\end{equation}
then, there is a converse construction, namely a bijection $\Phi_N:\widehat{\U}(N)\to\Lambda_N$. More details will be given in Section \ref{sec:Gfr} (see in particular Proposition \ref{prop:bijection_hw_partitions}). We will express the pushforward $(\Phi_N)_*\Gfr_N(q)$ in terms of $\Gfr_1(q)$ and a $q$-deformation of the uniform measure on the set of partitions.

A random partition $\alpha \in\Pfr$ is $q$-uniform if
\begin{equation}
\Pbb(\alpha=\beta) \propto q^{\vert\beta\vert},\quad \forall \beta\in\Pfr,
\end{equation}
and we denote by $\Ufr(q)$ the $q$-uniform distribution on $\Pfr$. We will show that for any $N\geq 1$ and any $q\in(0,1)$, if $\lambda \sim \Gfr_N(q),$ there exists a function $F_{N,q}:\Lambda_N\to\R$ such that, for all bounded measurable $f:\Pfr^2\times\Z\to\R$,
\[
\E[f\circ\Phi_N(\lambda)] = \E[f(\alpha,\beta,n) F_{N,q}(\alpha,\beta,n)],
\]
where $\lambda\sim\Gfr_N(q)$, $\alpha,\beta$ are two independent random partitions with distribution $\Ufr(q)$ and $n$ is a random integer with distribution $\Gfr_1(q)$, independent from $\alpha$ and $\beta$. The precise result is stated in Proposition \ref{prop:transfert}. A remarkable fact is that the coupling between $\alpha, \beta,$ and $n,$ quantified through $F_{N,q},$ vanishes asymptotically as $N\to\infty:$

\begin{theorem}\label{thm:cv_distrib}
For any $q\in(0,1)$, we have the following convergence in distribution:
\begin{equation}
(\Phi_N)_*\Gfr_N(q) \build{\longrightarrow}{N\to\infty}{\mathcal{L}} \Ufr(q)^{\otimes 2}\otimes\Gfr_1(q).
\end{equation}
\end{theorem}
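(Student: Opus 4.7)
My plan is to reduce the assertion to pointwise convergence of probability mass functions on the countable space $\Pfr^2\times\Z$. By Proposition \ref{prop:transfert}, for $(\alpha,\beta,n)\in\Lambda_N$ one has $(\Phi_N)_*\Gfr_N(q)(\alpha,\beta,n) = F_{N,q}(\alpha,\beta,n)\cdot \Ufr(q)(\{\alpha\})\,\Ufr(q)(\{\beta\})\,\Gfr_1(q)(\{n\})$, so it suffices to prove $F_{N,q}(\alpha,\beta,n)\to 1$ for every fixed triple (which belongs to $\Lambda_N$ as soon as $N$ is large). Scheffé's lemma then upgrades the pointwise convergence of the mass functions to convergence in total variation on this discrete space, and \emph{a fortiori} in distribution.

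The first step is an algebraic expansion of the Casimir. Writing $c_2(\lambda)=(\|\lambda+\rho\|^2-\|\rho\|^2)/N$ with $\rho_i=(N+1)/2-i$, and using the decomposition of $\lambda_N(\alpha,\beta,n)$ into an $\alpha$-block on top, a plateau of $n$'s in the middle, and a reflected $(-\beta)$-block at the bottom, a direct computation (using $\sum_i(N+1-2i)=0$ to eliminate the cross terms involving $n$) gives
$$
c_2(\lambda_N(\alpha,\beta,n)) = n^2+|\alpha|+|\beta|+\frac{2}{N}\bigl(n(|\alpha|-|\beta|)+\kappa(\alpha)+\kappa(\beta)\bigr),
$$
where $\kappa(\gamma):=\sum_{(i,j)\in\gamma}(j-i)$ is the sum of contents of $\gamma$. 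For fixed $(\alpha,\beta,n)$ the correction is $O(1/N)$, so $q^{c_2(\lambda_N(\alpha,\beta,n))}\to q^{n^2+|\alpha|+|\beta|}$. Combined with the explicit formula for $F_{N,q}$, the remaining task is to show that $Z_N(q)\to \theta(q)\bigl(\sum_{\alpha\in\Pfr}q^{|\alpha|}\bigr)^2$.

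For this, one rewrites $Z_N(q)=\sum_{(\alpha,\beta,n)\in\Lambda_N}q^{c_2(\lambda_N(\alpha,\beta,n))}$ and tries to apply dominated convergence, using the expansion above. The elementary bound $|\kappa(\gamma)|\le|\gamma|\max(\ell(\gamma),\gamma_1)$ controls the correction pointwise, but admissible triples may have $|\alpha|$ or $|\beta|$ of order $N$, and there the $1/N$ correction is no longer automatically small. This is the main obstacle, and it is handled by the deviation inequalities for the $q$-uniform measure established earlier in the paper: splitting $\Lambda_N$ into a bulk region where $|\alpha|,|\beta|,|n|$ are bounded (so that the correction vanishes uniformly) and a tail region whose contribution is majorized by the exponential tail estimates for $\Ufr(q)$ and $\Gfr_1(q)$ provides the required integrable majorant. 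Fatou's lemma yields the $\liminf$ direction for free, and the matching $\limsup$ is exactly what the deviation inequalities are needed for. Plugging the resulting convergence $Z_N(q)\to\theta(q)\bigl(\sum_{\alpha\in\Pfr}q^{|\alpha|}\bigr)^2$ back into the expression for $F_{N,q}$ gives $F_{N,q}\to 1$ pointwise, and the Scheffé argument concludes.
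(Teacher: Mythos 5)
Your overall route is essentially the paper's: pointwise convergence of the (unnormalized) mass functions plus a uniform control of the tails, with the convergence $Z_N(q)\to\theta(q)\phi(q)^{-2}$ proved along the way and then used to normalize (the paper phrases the last step through a dominated-convergence lemma for the counting measure rather than Scheff\'e, but that is cosmetic). The gap is in the one step you flag as ``the main obstacle'' and then dispose of too quickly. The tail region cannot be ``majorized by the exponential tail estimates for $\Ufr(q)$ and $\Gfr_1(q)$'' as stated, because those estimates control $\sum_{\vert\alpha\vert>M}q^{\vert\alpha\vert}$, whereas the summand in $Z_N$ is $q^{c_2(\lambda_N(\alpha,\beta,n))}$, and on the tail the correction $\tfrac{2}{N}\bigl(K(\alpha)+K(\beta)+n(\vert\alpha\vert-\vert\beta\vert)\bigr)$ is not small: with $\vert\alpha\vert$ of order $N$ one has $\tfrac{2}{N}\vert K(\alpha)\vert$ of order $N$ (your bound $\vert K(\gamma)\vert\le\vert\gamma\vert\max(\ell(\gamma),\gamma_1)\le\vert\gamma\vert^2$ only gives $O(\vert\alpha\vert^2/N)$), and the cross term $\tfrac{2}{N}n(\vert\alpha\vert-\vert\beta\vert)$ can be of order $-\vert n\vert\,\vert\alpha\vert/N$, so a priori $q^{c_2}$ may exceed $q^{\vert\alpha\vert+\vert\beta\vert+n^2}$ by a factor $q^{-cN}$. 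An honest majorant therefore requires a lower bound on $c_2$ that is uniform over $\Lambda_N$ and in $N$.

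This is exactly what the paper's Domination Lemma (Lemma \ref{lem:domination2}) supplies, and its proof is not a consequence of the deviation inequalities: one completes the square in $n$ to absorb the cross term into $\bigl(n+\tfrac{\vert\alpha\vert-\vert\beta\vert}{N}\bigr)^2$ at the price of $-\tfrac{1}{N^2}(\vert\alpha\vert-\vert\beta\vert)^2$, and then uses Cauchy--Schwarz together with the length constraints $\ell(\alpha)\le A_N$, $\ell(\beta)\le B_N$ (so $\ell/N\le 1/2$) to show $\tfrac{2}{N}K(\alpha)-\tfrac{1}{N^2}\vert\alpha\vert^2\ge-\tfrac{\vert\alpha\vert}{2}$, yielding $c_2\ge\tfrac12(\vert\alpha\vert+\vert\beta\vert)+\bigl(n+\tfrac{\vert\alpha\vert-\vert\beta\vert}{N}\bigr)^2$. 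One also needs $\sup_{x\in[0,1]}\sum_{n\in\Z}q^{(n+x)^2}<\infty$ (quasi-periodicity of $\theta$) to sum over $n$, since the shift $\tfrac{\vert\alpha\vert-\vert\beta\vert}{N}$ depends on $(\alpha,\beta,N)$. Only after this domination do the tail estimates for $\Ufr(q)$ become applicable, because the majorant is then $q^{\frac12(\vert\alpha\vert+\vert\beta\vert)}$ times a bounded $n$-sum. Your sketch would be complete once this lemma is inserted; without it, the bulk/tail splitting does not close.
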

In other terms, if $\alpha,\beta,n$ are such that $\lambda_N(\alpha,\beta,n)$ has distribution $\Gfr_N(q)$, then as $N$ tends to infinity they become independent, with $\alpha,\beta\sim\Ufr(q)$ and $n\sim\Gfr_1(q)$.

\subsection{Asymptotic expansion of the partition function}

Building on the formulation of highest weights as couplings of partitions, on  their asymptotic independence and on deviation inequalities under $\Ufr(q)$ that we establish in Section \ref{sec:deviation}, we will provide an asymptotic expansion of $Z_N(q)$ in powers of $\frac{1}{N}$ up to any order. By symmetry, coefficients of odd powers exactly vanish, leading to an expansion in powers of $\frac{1}{N^2}:$
\begin{theorem}\label{thm:main}
For any $q\in(0,1),$  the partition function $Z_N(q)$ admits the following expansion: for any $p\geq 1$,
\begin{equation}
Z_N(q) = \sum_{k=0}^p \frac{a_{2k}(q)}{N^{2k}} + O(N^{-2p-2}).
\end{equation}
\end{theorem}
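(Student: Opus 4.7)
My plan is to transport $Z_N(q)$ to the parameter space $\Lambda_N$ via the bijection $\Phi_N$, then Taylor-expand the resulting weight in powers of $1/N$. A short computation based on the construction of $\lambda_N(\alpha,\beta,n)$ (Figure~\ref{fig:decomp_highest_weight}) yields the key identity
\begin{equation}\label{eq:main_plan_casimir}
c_2(\lambda_N(\alpha,\beta,n)) \;=\; n^2 + |\alpha| + |\beta| + \frac{2}{N}\, R(\alpha,\beta,n),
\end{equation}
where $R(\alpha,\beta,n) := n(|\alpha|-|\beta|) + \kappa(\alpha) + \kappa(\beta)$ and $\kappa(\mu) := \sum_{(i,j)\in\mu}(j-i)$ is the total content of $\mu$. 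Plugging (\ref{eq:main_plan_casimir}) into the definition of $Z_N(q)$ and expanding $q^{2R/N}$ as a power series in $1/N$ produces, at order $p$,
\[
Z_N(q) \;=\; \sum_{k=0}^{p} \frac{(2\ln q)^k}{k!\, N^{k}}\, M_{k,N}(q) \;+\; \mathrm{Rem}_{p,N}(q), \qquad M_{k,N}(q) := \sum_{(\alpha,\beta,n) \in \Lambda_N} q^{n^2+|\alpha|+|\beta|}\, R^k.
\]

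Since $R$ is a fixed polynomial in $n, |\alpha|, |\beta|, \kappa(\alpha), \kappa(\beta)$ and each of these has every polynomial moment under $\Gfr_1(q) \otimes \Ufr(q)^{\otimes 2}$ by the deviation inequalities of Section~\ref{sec:deviation}, each $M_{k,N}$ is finite, and replacing the truncation $\Lambda_N$ by the full space $\Pfr^2\times\Z$ introduces only a super-polynomial-in-$N$ error (again by the deviation inequalities applied to $\ell(\alpha)$ and $\ell(\beta)$). This identifies the candidate coefficients $a_k(q) := \tfrac{(2\ln q)^k}{k!}\, M_k(q)$, with $M_k(q)$ the corresponding moment over $\Pfr^2\times\Z$.

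The main technical obstacle is the uniform control of $\mathrm{Rem}_{p,N}$. From the integral form
\[
q^{2R/N} - \sum_{k=0}^{p} \frac{(2R\ln q)^k}{k!\, N^k} \;=\; \frac{(2R\ln q)^{p+1}}{p!\, N^{p+1}} \int_0^1 (1-t)^p\, q^{2tR/N}\, dt,
\]
one sees that the naive pointwise Lagrange bound only works where $|R|/N$ is uniformly small, which in principle fails since $|R|$ can grow like $(|\alpha|+|\beta|)^2$. I would therefore split the configuration space into a bulk $\mathcal{B}_N := \{|\alpha|+|\beta|+|n| \leq \varepsilon\sqrt{N}\}$, on which $|R|/N = O(\varepsilon^{2})$ is uniformly bounded and the remainder is pointwise $O(|R|^{p+1}/N^{p+1})$; integrating against $q^{n^2+|\alpha|+|\beta|}$ and invoking the polynomial moments of $R$ yields a contribution of order $N^{-(p+1)}$. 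On the tail $\Lambda_N\setminus\mathcal{B}_N$, I would argue directly on $\sum q^{c_2(\lambda)}$: combining (\ref{eq:main_plan_casimir}) with the crude bound $|\kappa(\mu)| \leq \ell(\mu)|\mu|$ and the constraints $\ell(\alpha) \leq A_N$, $\ell(\beta) \leq B_N$ gives a linear lower bound of the form $c_2 \geq c\,(n^2 + |\alpha|+|\beta|)$ in a ``moderate'' subregion, while on the remaining far tail the term $n^2$ alone dominates; together with the transfer formula of Proposition~\ref{prop:transfert} and the deviation inequalities of Section~\ref{sec:deviation}, this forces the tail sum to decay super-polynomially in $N$.

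Finally, the vanishing of the odd coefficients follows from the involution $\sigma : (\alpha,\beta,n) \mapsto (\alpha',\beta',-n)$ on $\Pfr^2\times\Z$, where $\alpha'$ denotes the conjugate partition. The base weight $q^{n^2+|\alpha|+|\beta|}$ is $\sigma$-invariant since $|\mu'|=|\mu|$, while the identities $\kappa(\mu') = -\kappa(\mu)$ combined with $n \mapsto -n$ turn $R$ into $-R$. Hence $M_k(q) = 0$ for every odd $k$, the expansion reduces to powers of $1/N^{2}$, and Theorem~\ref{thm:main} follows with $a_{2k}(q) = \tfrac{(2\ln q)^{2k}}{(2k)!}\, M_{2k}(q)$.
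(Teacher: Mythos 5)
Your proposal is correct and follows essentially the same route as the paper: transfer $Z_N(q)$ to $\Lambda_N$ via $\Phi_N$ and the Casimir formula of Proposition \ref{prop:cas_alphabeta}, Taylor-expand $q^{2F/N}$, remove the truncation $\mathbf{1}_{\Lambda_N}$ at super-polynomial cost using the deviation inequalities, and kill the odd coefficients by the conjugation involution (your single involution $(\alpha,\beta,n)\mapsto(\alpha',\beta',-n)$ is a slightly cleaner packaging of Lemma \ref{lem:finite_pow}). The only real divergence is in controlling the integral remainder, where you split into a bulk $\{|\alpha|+|\beta|+|n|\le\varepsilon\sqrt N\}$ and a tail handled by a lower bound on $c_2$, whereas the paper bounds $e^{-sF/N}\le 1+e^{-tF/N}$ and applies H\"older together with the domination Lemma \ref{lem:domination2}; both work, though your tail estimate needs the same care as that lemma (the bound $c_2\ge c(n^2+|\alpha|+|\beta|)$ is not literally true and must be replaced by $c_2\ge\tfrac12(|\alpha|+|\beta|)+(n+(|\alpha|-|\beta|)/N)^2$ plus the quasi-periodicity of $\theta$).
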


All coefficients are explicit and  will be provided in a more detailed way in Theorem \ref{thm:main3}. We will now make the link with two-dimensional Yang--Mills theory and show that, up to a simple correspondence of parameters, the partition function $Z_N(q)$ is the partition function of the latter on a torus.

\subsection{Two-dimensional Yang--Mills theory}

The two-dimensional Yang--Mills theory with gauge group $\U(N)$ in the large $N$ limit has been popularized by the breakthrough work from 't Hooft \cite{Hoo74} to describe strong interaction in the large coupling regime. The probabilistic counterpart of this quantum field theory is the Yang--Mills holonomy process. It is a stochastic process $(H_\ell)$ indexed by a space of loops on a compact connected orientable surface $\Sigma_{g,t}$ of genus $g$ and area $t,$ with values in a compact Lie group $G$. It has been constructed by Sengupta \cite{Sen} and L\'evy \cite{Lev00} based on the fact that its finite-dimensional marginals correspond to the discrete Yang--Mills measure on an area-preserving cellular embedding of a graph in $\Sigma_{g,t}$. A recent overview of this process and its relationship with the quantum Yang--Mills theory considered by physicists can be found in the lecture notes by L\'evy \cite{Lev2}.  
In the case when $G=\U(N)$ and if its Lie algebra is endowed with the inner product $\langle X,Y\rangle = N\Tr(XY^*)$, using the character expansion of the heat kernel on $\U(N),$ the partition function of the model can be expressed as
\begin{equation} \label{eq:Z-YM}
Z_N^\YM(g,t) = \sum_{\lambda\in\widehat{\U}(N)}e^{-\frac{t}{2}c_2(\lambda)} d_\lambda^{2-2g},
\end{equation}
where $d_\lambda$ denotes the dimension of the irreducible representation of $\U(N)$ with highest weight $\lambda$. In the case  $g=1$, we have  
\[
Z_N^\YM(1,t)= Z_N(e^{-\frac{t}{2}}).
\]
Therefore, the  asymptotic expansion given by Theorem \ref{thm:main} can be immediately translated into an asymptotic expansion for the partition function $Z_N^\YM(1,t)$ on a torus.

As a direct corollary of our asymptotic expansion, we recover the convergence, for any value of $t$ of the Yang--Mills partition function on a torus to a finite value, which was proved by the first author through other means. Let us define the Euler function $\phi$ as follows: for any $q\in\C$ such that $\vert q\vert<1,$ 
\begin{equation}
\label{eq:phi}    
\phi(q) = \prod_{m=1}^\infty (1-q^m).
\end{equation}

\begin{theorem}[\cite{Lem}]\label{thm:lim_pf}
For any $t>0$, setting $q_t=e^{-\frac{t}{2}}$, we have
\begin{equation}\label{eq:lim_pf_TL}
\lim_{N\to\infty} Z_N^\YM(1,t) = \frac{\theta(q_t)}{\phi(q_t)^{2}}.
\end{equation}
\end{theorem}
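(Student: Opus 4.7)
Since Theorem \ref{thm:lim_pf} is announced as a corollary of the asymptotic expansion, the plan is to apply Theorem \ref{thm:main} with $p=0$, which yields
\[
\lim_{N\to\infty} Z_N^\YM(1,t) = \lim_{N\to\infty} Z_N(q_t) = a_0(q_t),
\]
and then to identify the constant coefficient $a_0(q_t)$ as $\theta(q_t)/\phi(q_t)^2$. I would do so by rewriting $Z_N(q)$ through the bijection $\Phi_N$ of Proposition \ref{prop:bijection_hw_partitions} and taking the termwise limit.

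\textbf{Unfolding the sum.} Using $\Phi_N$,
\[
Z_N(q) = \sum_{(\alpha,\beta,n)\in\Lambda_N} q^{c_2(\lambda_N(\alpha,\beta,n))}.
\]
Plugging the explicit coordinates of $\lambda_N(\alpha,\beta,n)$ (the parts $n+\alpha_i$ on top, a block of $n$'s in the middle, and the parts $n-\beta_j$ at the bottom) into the definition of $c_2$ and separating the contributions from $\alpha$, $\beta$ and the shift by $n$, a direct calculation gives
\[
c_2(\lambda_N(\alpha,\beta,n)) = |\alpha|+|\beta|+n^2 + \frac{1}{N}\bigl[\kappa(\alpha)+\kappa(\beta)+2n(|\alpha|-|\beta|)\bigr],
\]
where $\kappa(\mu)=\sum_i \mu_i(\mu_i-2i+1)=2\sum_{\square\in\mu} c(\square)$ is twice the sum of contents. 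Hence
\[
Z_N(q) = \sum_{(\alpha,\beta,n)\in\Lambda_N} q^{|\alpha|+|\beta|+n^2}\, q^{\frac{1}{N}[\kappa(\alpha)+\kappa(\beta)+2n(|\alpha|-|\beta|)]}.
\]
As $N\to\infty$ the length restrictions $\ell(\alpha)\le A_N$, $\ell(\beta)\le B_N$ become vacuous, and the correction factor $q^{\cdot/N}$ converges pointwise to $1$. Assuming the limit and the sum can be exchanged, one obtains
\[
\lim_{N\to\infty} Z_N(q) \;=\; \sum_{(\alpha,\beta,n)\in\Pfr^2\times\Z} q^{|\alpha|+|\beta|+n^2} \;=\; \Bigl(\sum_{\alpha\in\Pfr} q^{|\alpha|}\Bigr)^{\!2} \sum_{n\in\Z} q^{n^2} \;=\; \frac{\theta(q)}{\phi(q)^2},
\]
using Euler's generating function $\sum_\alpha q^{|\alpha|} = 1/\phi(q)$ and the definition of $\theta$.

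\textbf{Main obstacle: justifying the exchange of limit and sum.} The correction $q^{\kappa(\alpha)/N}$ is \emph{not} uniformly bounded by $1$ because $\kappa(\alpha)$ can be negative and satisfies only the quadratic bound $|\kappa(\alpha)|\le |\alpha|(|\alpha|-1)$; the analogous issue affects $\beta$, and the mixed term $2n(|\alpha|-|\beta|)/N$ can grow with $n$ as well. A crude domination therefore fails. To circumvent this I would use the length constraint $\ell(\alpha)\le A_N\sim N/2$ from $\Lambda_N$ together with the sharper bound $\kappa(\alpha)\ge -2|\alpha|\,\ell(\alpha)$ coming from the content representation, so that $\kappa(\alpha)/N$ is controlled by $-C|\alpha|$ uniformly in $N$; after choosing some $q<q'<1$ this yields $q^{\kappa(\alpha)/N}\le (q/q')^{|\alpha|}$, and similarly for $\beta$, which provides the integrable envelope for dominated convergence. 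The $n$-dependent cross term is handled analogously using $2|n||\alpha|/N\le n^2/N+|\alpha|^2/N$ and the same bound on $|\alpha|^2/N$ on $\Lambda_N$. The technical heart of the argument is thus this uniform tail estimate, for which the deviation inequalities for $\Ufr(q)$ established in Section \ref{sec:deviation} provide the natural framework.
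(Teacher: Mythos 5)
Your overall route is the same as the paper's: push $Z_N$ through $\Phi_N$, take the pointwise limit of $q^{c_2(\lambda_N(\alpha,\beta,n))}$, and justify the interchange of limit and sum by a domination argument exploiting the length constraints defining $\Lambda_N$ (the paper does exactly this in the simultaneous proof of Theorems \ref{thm:cv_distrib} and \ref{thm:lim_pf}, via the Domination Lemma \ref{lem:domination2} and a Vitali-type convergence criterion). You also correctly identify the domination as the technical heart. However, the estimates you propose do not close, in three places. First, the inequality $q^{\kappa(\alpha)/N}\le(q/q')^{|\alpha|}$ is false as stated: the left side exceeds $1$ whenever $\kappa(\alpha)=2K(\alpha)<0$ (e.g.\ $\alpha$ a single column), while the right side is always $<1$. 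Second, and more seriously, your bound $2K(\alpha)\ge-2|\alpha|\ell(\alpha)$ is too weak by exactly the factor that matters: combined with $\ell(\alpha)\le A_N<N/2$ it only gives $2K(\alpha)/N\ge-|\alpha|$, so the envelope for the $\alpha$-part degenerates to $q^{|\alpha|+2K(\alpha)/N}\le q^{0}=1$, which is not summable over $\Pfr$. One needs the sharper estimate $\sum_i\alpha_i(2i-1)\le|\alpha|\ell(\alpha)$ (Chebyshev's sum inequality for the nonincreasing sequence $\alpha$ against the increasing sequence $2i-1$), which yields $2K(\alpha)\ge-|\alpha|\ell(\alpha)$, hence $2K(\alpha)/N\ge-|\alpha|/2$ and the summable envelope $q^{|\alpha|/2}$; this bound is attained by single columns, so there is no slack to waste. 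This is precisely what Lemma \ref{lem:domination2} establishes.

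Third, the cross term. The split $2|n|\,\bigl||\alpha|-|\beta|\bigr|/N\le n^2/N+(|\alpha|-|\beta|)^2/N$ leaves a loss of order $|\alpha|^2/N$, which is not $O(|\alpha|)$ uniformly in $N$ (take $\alpha$ a single row of length $M\gg N$, which satisfies $\ell(\alpha)=1$) and cannot be absorbed: the Cauchy--Schwarz bound $|\alpha|^2\le\ell(\alpha)\sum_i\alpha_i^2$ turns it into $\tfrac12\sum_i\alpha_i^2$, whereas the only positive quadratic term available, inside $2K(\alpha)/N$, is $\tfrac1N\sum_i\alpha_i^2$. The paper instead completes the square in $n$, writing $n^2+\tfrac{2n(|\alpha|-|\beta|)}{N}=\bigl(n+\tfrac{|\alpha|-|\beta|}{N}\bigr)^2-\tfrac{(|\alpha|-|\beta|)^2}{N^2}$, so the loss is only of order $1/N^2$; Cauchy--Schwarz then bounds it by $\tfrac1{2N}\sum_i\alpha_i^2+\tfrac1{2N}\sum_j\beta_j^2$, which \emph{is} absorbed by the quadratic part of $\tfrac2N(K(\alpha)+K(\beta))$, and the resulting shifted Gaussian sum in $n$ is controlled uniformly via the $1$-periodic quantity $\sup_{x\in[0,1]}\sum_{n\in\Z}q^{(n+x)^2}<\infty$. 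With these three corrections your argument becomes the paper's proof; as written, the domination step fails.
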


The study of the partition function in higher genus needs an additional analysis, coming from the contribution of negative powers of dimensions of representations. The partition function in this case can be seen as a $q$-deformation of the Witten zeta function
\[
\zeta_{\SU(N)}(2g-2)=\sum_{\lambda\in\widehat{\SU}(N)}d_\lambda^{2-2g}.
\]
An asymptotic expansion of this function has been proved by Magee \cite{Mag}, without any topological interpretation; in a forthcoming paper, we will address the asymptotic expansion of the partition function for $g\geq 2$ based on the results of \cite{Mag} and the present paper. We also expect that the tools developed for the partition function might work for the general distribution of the Yang--Mills holonomy field, leading to an improvement of some recent results obtained by the first author with  Dahlqvist \cite{DL,DL2}.

\subsection{On the notion of topological expansion}\label{subsec:topo}

In theoretical physics, in particular in the so-called \emph{matrix models}, there is a formal counterpart of the matrix integrals called the \emph{formal matrix integrals}, consisting of the sums of power series expansions which are not required to converge \cite{Eyn2}. These power series expansions are often called \emph{topological expansions} when their coefficients are related to topological invariants, or to the enumeration of coverings or topological maps. The idea is that, whether the power series converges or not, computing these coefficients become more tractable because of this topological/geometric interpretation. As discussed by Eynard \cite{Eyn2,Eyn}, there has been for a long time a confusion between asymptotic expansions of matrix integrals and topological expansions of formal matrix integrals, and many results from the theoretical physics literature, for instance the seminal papers \cite{Hoo74,BIPZ}, are rigorous at the level of \emph{formal matrix integrals}, which is sufficient from a combinatorial perspective. The remaining question is whether these formal matrix integrals coincide with the convergent matrix integrals; it has been answered in a few special cases, see for instance \cite{BI,GZ,GM,GMS,GN}. Concerning the Yang--Mills partition function $Z_N^\YM(g,t)$, it is known to admit a formal series expansion by major series of works in theoretical physics \cite{GT,GT2,BT,Ram,CMR}. Our claim is that the asymptotic expansion in Theorem \ref{thm:main} matches the formal power series expansions obtained by Gross and Taylor \cite{GT,GT2}. It is not only a nontrivial result, but also contradicts the predictions of Gross and Taylor, who stated that the topological expansion that they exhibited diverges. The precise statement of this topological expansion is given in Theorem \ref{thm:top_exp_torus}. In a slightly simplified setting, one can consider the so-called \emph{chiral partition function} obtained by isolating only one of the coupled partitions $(\alpha,\beta)$
\[
Z_N^\chir(1,t) = \sum_{\substack{\alpha\in\Pfr\\ \ell(\alpha)\leq N}}e^{-\frac{t}{2}c_2(\alpha)}.
\]
In this case, we are able to prove  the following.

\begin{theorem}\label{thm:chiral_pf}
The chiral partition function of the Yang--Mills measure on a torus of area $t>0$ admits the following asymptotic expansion, for any $p\in\N$:
\begin{equation}\label{eq:chiral_pf}
Z_N^\chir(1,t) = \sum_{k=0}^{p}\frac{t^{2k}}{(2k)!N^{2k}}\F_{1,k+1}(e^{-\frac{t}{2}}) + O(N^{-2p-2}),
\end{equation}
where $\F_{1,k+1}$ denotes the generating function of the number of (generically) ramified coverings of the torus with genus $k+1$.
\end{theorem}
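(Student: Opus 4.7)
The plan is to isolate the $N$-dependence of the Casimir by writing $c_2(\alpha) = |\alpha| + \kappa(\alpha)/N$, where $\kappa(\alpha) := \sum_{i}\alpha_i(\alpha_i - 2i + 1) = 2\sum_{(i,j)\in\alpha}(j-i)$ is twice the sum of contents of the Young diagram of $\alpha$. Setting $q = e^{-t/2}$, this gives
\[
Z_N^\chir(1,t) = \sum_{\alpha\in\Pfr,\,\ell(\alpha)\leq N} q^{|\alpha|}\, e^{-t\kappa(\alpha)/(2N)}.
\]
I would then Taylor-expand the second exponential in $1/N$ up to order $2p+1$ and commute the two summations, obtaining a finite expansion in $1/N$ plus a remainder.

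Two sources of error must be shown to be $O(N^{-2p-2})$. For each truncated order $k\leq 2p+1$, the constraint $\ell(\alpha)\leq N$ can be dropped at a cost $O(q^{N/c})$ for some $c>0$, since $\ell(\alpha)>N$ forces $|\alpha|>N$, so $\sum_{|\alpha|>N}q^{|\alpha|}|\kappa(\alpha)|^k$ is exponentially small. The Taylor remainder requires more care because $|\kappa(\alpha)|$ may be as large as $|\alpha|^2$: the deviation inequalities for $\kappa$ under the $q$-uniform measure $\Ufr(q)$ established in Section~\ref{sec:deviation} provide the moment bounds and tail estimates needed to conclude that the $(2p+2)$-th remainder term contributes $O(N^{-2p-2})$. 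Once the cutoff is removed and each truncated sum runs over all of $\Pfr$, the odd-order terms vanish by conjugation symmetry: the involution $\alpha\mapsto\alpha'$ preserves $|\alpha|$ and negates $\kappa(\alpha)$, since the content $(i,j)\mapsto(j,i)$ flips sign, hence $\sum_\alpha q^{|\alpha|}\kappa(\alpha)^k = 0$ for odd $k$.

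For the surviving even orders $k = 2m$, identification with $\F_{1,m+1}(q)$ proceeds via the Frobenius character formula. The number of degree-$d$ ramified coverings of the torus with $r$ simple branch points, weighted by $1/|\mathrm{Aut}|$, equals
\[
\sum_{\lambda\vdash d}\Bigl(\frac{|T|\chi^\lambda(T)}{f^\lambda}\Bigr)^{\!r},
\]
where $T\subset S_d$ is the conjugacy class of transpositions and $f^\lambda$ denotes the $S_d$-dimension of $\lambda$. The classical content identity $|T|\chi^\lambda(T)/f^\lambda = \sum_{\Box\in\lambda}c(\Box) = \kappa(\lambda)/2$, combined with the Riemann--Hurwitz constraint $r = 2m$ for a covering of genus $m+1$, yields $\F_{1,m+1}(q) = 4^{-m}\sum_\alpha q^{|\alpha|}\kappa(\alpha)^{2m}$. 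Combined with the prefactor $(2N)^{2m} = 4^m N^{2m}$, the $4^m$ cancels and produces the announced coefficient $t^{2m}\F_{1,m+1}(q)/((2m)!N^{2m})$. The main obstacle will be the uniform control of the Taylor remainder: the exponential weight $q^{|\alpha|}$ must absorb the potential growth of $|\kappa(\alpha)|$, and this is precisely where the deviation estimates of Section~\ref{sec:deviation} become indispensable; the algebraic identification in the final step is, by contrast, a direct consequence of the Frobenius formula.
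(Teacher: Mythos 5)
Your overall strategy is the same as the paper's: write $c_2(\tilde{\alpha})=|\alpha|+\tfrac{2}{N}K(\alpha)$, Taylor-expand the exponential in $1/N$, kill the odd orders by the conjugation involution $\alpha\mapsto\alpha'$, and identify the even moments $\sum_\alpha q^{|\alpha|}K(\alpha)^{2m}$ with $\F_{1,m+1}(q)$. Your identification step via the Frobenius formula and the content identity $|T|\chi^\lambda(T)/f^\lambda=\sum_{\square}c(\square)=K(\lambda)$ is correct and is precisely how the paper defines $H_1(n,2m)$ in \eqref{eq:HurwitzNumbers}; your factors of $4^{\pm m}$ coming from working with $\kappa=2K$ cancel as you say, so the algebraic half of the argument is fine.

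The gap is in the analytic half, exactly where you place ``the main obstacle'', and the tools you invoke do not close it. Section \ref{sec:deviation} gives tail bounds for $\ell(\alpha)$ and $|\alpha|$ under $\Ufr(q)$. These do control the cost of dropping the cutoff in each polynomial term $\sum q^{|\alpha|}K(\alpha)^k$, as you say, but the Taylor remainder requires a bound of the form $\sum_{\ell(\alpha)\le N} q^{|\alpha|}|K(\alpha)|^{2p+2}e^{-sK(\alpha)/N}=O(1)$ uniformly in $N$ and $s\in[0,t]$, and that bound is \emph{false} with the cutoff $\ell(\alpha)\le N$: for $\alpha=(1^N)$ one has $K(\alpha)=-N(N-1)/2$, hence $q^{|\alpha|}e^{-tK(\alpha)/N}=q^{N}\cdot q^{-(N-1)}=q$, and this single term already contributes $\asymp q\,N^{4p+4}$ to the sum. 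The weight $q^{|\alpha|}$ does not absorb $e^{c|K(\alpha)|/N}$ on partitions with $\ell(\alpha)$ of order $N$; what rescues the non-chiral proof of Theorem \ref{thm:main3} is that the bijection $\Phi_N$ forces $\ell(\alpha)\le A_N\approx N/2$, which is exactly the hypothesis the domination Lemma \ref{lem:domination2} exploits, and the chiral cutoff $\ell(\alpha)\le N$ does not supply it. Nor is this a removable technicality: the region $N/2<\ell(\alpha)\le N$ contributes an amount that does not vanish as $N\to\infty$ (the term $(1^N)$ alone contributes $e^{-t/2}$ for every $N$), so an expansion with leading term $\F_{1,1}(q_t)=\phi(q_t)^{-1}$ cannot hold for the sum as written; the argument does go through verbatim if the chiral cutoff is tightened to $\ell(\alpha)\le\lfloor N/2\rfloor$ (or any region where Lemma \ref{lem:domination2}'s mechanism applies). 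To be fair, the paper's own one-line proof (``the exact same estimates as in the proof of Theorem \ref{thm:main3}'') passes over the identical point, so your proposal is no weaker than the printed argument --- but the step you defer to Section \ref{sec:deviation} is genuinely where the proof, as it stands, fails.
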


If we denote by $\rho_t$ a properly defined uniform measure on the space $\mathcal{R}$ of the ramified coverings of the torus of area $t$, Equation \eqref{eq:chiral_pf} can be turned into an expression in terms of \emph{random ramified coverings}, as conjectured in \cite{GT}:
\begin{equation}
Z_N^\chir(1,t) = \int_{\mathcal{R}}e^{-\frac{t}{2}\deg(R)}N^{\chi(R)}\rho_t(dR),
\end{equation}
where $\deg(R)$ and $\chi(R)$ are respectively the \emph{degree} and \emph{Euler characteristic} of the ramified covering $R$. The precise statement is detailed in Section~\ref{sec:expansions}. Such integrals appear in string theory, where the ramified coverings of the torus are seen as worldsheets, i.e. string embeddings, and the interplay between Yang--Mills theory and string theory is the \emph{gauge/string duality} developed (at least at the level of formal power series) by Gross and Taylor. The only other proofs of such duality, for convergent matrix integrals, seem to have been found in the plane \cite{Lev,PPSY}.

\subsection{Organisation of the paper}

In Section \ref{sec:phi_theta}, we provide  a systematic study of the measure $\Ufr(q)$ on $\Pfr$, and we give deviation inequalities for this measure that will be used in the remainder of the paper. 
In  Section \ref{sec:Gfr}, we introduce the measure 
$\Gfr_N(q).$ After a quick description of $\Gfr_1(q),$
we will provide the relationship between $\Ufr(q)$, $\Gfr_1(q)$ and $\Gfr_N(q).$ A key  feature is the construction of an explicit bijection between the set of highest weights and the subset $\Lambda_N\subset\Pfr^2\times\Z.$ In Section \ref{sec:largeN}, we will prove the asymptotic decoupling described in Theorem \ref{thm:cv_distrib} and the asymptotic expansion of Theorem \ref{thm:main}.
In Section \ref{sec:expansions}, we will show that in the case of the torus, the asymptotic expansion provided by Theorem \ref{thm:main} is in fact a topological expansion and we will also discuss the chiral case. Preliminary material for topological expansions will be gathered in Section \ref{sec:enumeration}, in particular we will define and study the Frobenius measures in relation with the enumeration of ramified coverings.\\

{\bf Acknowledgements.} Both authors would like to thank Justine Louis for useful discussions at the beginning of the project. T.L. would also thank Antoine Dahlqvist for pointing him out the reference \cite{Ver} at an early stage of this work. This work is partially supported by ANR AI chair BACCARAT (ANR-20-CHIA-0002) and by the Labex CEMPI (ANR-11-LABX-0007-01).

\section{$q$-Uniform measure on partitions}\label{sec:phi_theta}

In this section we study in detail the measure $\Ufr(q)$ and prove deviation inequalities needed for the asymptotic expansions.

\subsection{Definitions and main properties}\label{sec:def_prop_uq}

Before going into the study of  the measure $\Ufr(q),$ we recall some well-known facts
about integer partitions. An \emph{integer partition}\footnote{It is more usual to denote integer partitions by $\lambda$, but we rather keep the notations $\lambda,\mu$ for highest weights and use $\alpha,\beta$ for partitions, in order to avoid confusions, apart from specific cases where a highest weight is also a partition.} $\alpha\vdash n$ of size $n$ is a finite family of integers $\alpha=(\alpha_1,\ldots,\alpha_{\ell(\alpha)})\in\N^{\ell(\alpha)}$ with $\alpha_1\geq\cdots\geq\alpha_{\ell(\alpha)}>0$, $|\alpha|:=\alpha_1+\cdots +\alpha_{\ell(\alpha)} = n$. The integers $\alpha_1,\ldots,\alpha_{\ell(\alpha)}$ are called the \emph{parts} of the partition and $\ell(\alpha)$ is called its \emph{length}. We denote by $\Pfr_n$ the set of partitions of size $n$, and $\Pfr=\bigsqcup_n \Pfr_n$ the set of all integer partitions. A partition $\alpha\vdash n$ is usually represented by a \emph{Young diagram}, which is a collection of $n$ boxes in $\ell(\alpha)$ left-justified rows, where the $i-$th row contains $\alpha_i$ boxes. An example is displayed in Figure \ref{fig:young_diagram}.
\begin{figure}[t]
    \centering
    \includegraphics{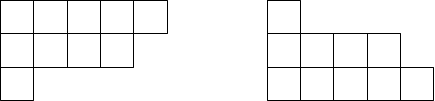}
    \caption{The Young diagram of the partition $(5,4,1)$, in the English convention (on the left), and the French convention (on the right).}
    \label{fig:young_diagram}
\end{figure}
In this representation, each box is labelled $(i,j)$, where $1\leq i\leq\ell(\alpha)$ (resp. $1\leq j\leq \alpha_i$) is the number of the row (resp. column) that contains the box. The \emph{content} of the box $\square=(i,j)$ is defined as
\[
c(\square)=j-i.
\]
The \emph{total content} of a partition $\alpha$ is the sum of the contents of each boxes in the corresponding Young diagram:
\begin{equation}
\label{def:content}
K(\alpha) = \sum_{\square\in\alpha} c(\square).
\end{equation}

It is well known that integer partitions of size $n$ are in bijection with the irreducible representations of the symmetric group $\mathfrak{S}_n$ (see e.g.  \cite{Sag} for a nice introduction).  Consequently, for each partition $\alpha\vdash n,$ one can define  the character of the irreducible representation of $\mathfrak{S}_n$ as the trace of the corresponding representation, and we shall denote it by $\chi_\alpha.$ 

Let us now define our family of measures $\Ufr(q).$

\begin{definition}
Let $q\in(0,1)$ be a real parameter. A random partition $\alpha\in\Pfr$ is distributed according to the \emph{$q$-uniform measure} $\Ufr(q)$ if
\begin{equation}
\Pbb(\alpha=\beta)=\phi(q)q^{\vert \beta\vert},\quad \forall \beta\in\Pfr,
\end{equation}
where $\phi$ is the Euler function defined in \eqref{eq:phi}.
\end{definition}\label{eq:euler}
The fact that $\Ufr(q)$ is a probability measure comes from the classical equality
\begin{equation}
\sum_{\alpha\in\Pfr} q^{\vert\alpha\vert} = \sum_{n\geq 0}p(n)q^n = \phi(q)^{-1},\ \forall q\in\C, \ \vert q\vert <1,
\end{equation}
with $p(n)$ the number of partitions of the integer $n.$ The name  \textit{$q$-uniform} comes from the fact that a $q$-uniform random partition $\alpha$, conditionnally on the event $\{\alpha\vdash n\}$, is actually uniformly distributed in $\Pfr_n$. 

\begin{proposition}\label{prop:phi_condition}
Let $q\in(0,1)$ be fixed, and $\alpha\in\Pfr$ be a random partition with distribution $\Ufr(q)$. The distribution of $\vert\alpha\vert$ is given by
\begin{equation}
\Pbb(\vert\alpha\vert = n) = \phi(q) p(n)q^n,\ \forall n\in\N.
\end{equation}
Moreover, conditionally to $\vert\alpha\vert=n$, $\alpha$ is uniform on $\Pfr_n$.
\end{proposition}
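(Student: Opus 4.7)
The plan is to derive both statements by a direct computation from the definition of $\Ufr(q)$, using the disjoint decomposition $\Pfr=\bigsqcup_{n\ge 0}\Pfr_n$ and the combinatorial identity $|\Pfr_n|=p(n)$. No clever estimate or limit argument is needed: the proposition is essentially a restatement of the definition, with the Euler function $\phi(q)$ recognized as the normalizing constant.

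First I would compute the marginal of $|\alpha|$. The event $\{|\alpha|=n\}$ coincides with $\{\alpha\in\Pfr_n\}$, and on this set the density $\Pbb(\alpha=\beta)=\phi(q)\,q^{|\beta|}$ reduces to the constant $\phi(q)\,q^n$ since $|\beta|=n$ for every $\beta\in\Pfr_n$. Summing over $\beta\in\Pfr_n$ gives
\[
\Pbb(|\alpha|=n)=\sum_{\beta\in\Pfr_n}\phi(q)\,q^{n}=p(n)\,\phi(q)\,q^{n},
\]
which is the first claim. As a consistency check one recovers $\sum_{n\ge 0}\Pbb(|\alpha|=n)=\phi(q)\sum_{n}p(n)q^n=1$ from the identity recalled just above the statement.

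Next I would handle the conditional distribution by applying the definition of conditional probability. For any $n\in\N$ one has $\Pbb(|\alpha|=n)>0$ because $q\in(0,1)$ and $\phi(q)>0$, so conditioning is well-defined. For every $\beta\in\Pfr_n$,
\[
\Pbb(\alpha=\beta\,\vert\,|\alpha|=n)=\frac{\Pbb(\alpha=\beta)}{\Pbb(|\alpha|=n)}=\frac{\phi(q)\,q^{n}}{p(n)\,\phi(q)\,q^{n}}=\frac{1}{p(n)}.
\]
The right-hand side does not depend on the choice of $\beta\in\Pfr_n$, which is precisely the assertion that the conditional law is uniform on $\Pfr_n$.

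There is no real obstacle to this proof; the content is entirely in the definitions. The only point that deserves a line of care is the positivity of the denominator before conditioning, which I have just noted. The whole argument is essentially a one-line reformulation of the defining formula $\Pbb(\alpha=\beta)\propto q^{|\beta|}$ and justifies the name \emph{$q$-uniform measure} given to $\Ufr(q)$.
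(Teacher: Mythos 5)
Your proof is correct and follows essentially the same route as the paper's: the marginal of $|\alpha|$ is obtained by summing the constant density $\phi(q)q^n$ over the $p(n)$ elements of $\Pfr_n$, and the conditional law is computed directly from the definition of conditional probability. The extra remarks you add (the normalization check and the positivity of the denominator) are harmless elaborations of what the paper leaves implicit.
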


\begin{proof}
The expression of $\Pbb(\vert\alpha\vert = n)$, for $n\in\N$, follows from the definition of $p(n).$ Then, we have for any $\beta\in\Pfr$ and any $n\in\N$,
\[
\Pbb(\alpha = \beta \ \vert\  \vert\alpha\vert = n) = \frac{\Pbb(\{\alpha=\beta\} \cap \{\vert\alpha\vert = n\})}{\Pbb(\vert\alpha\vert = n)} = \frac{1}{p(n)},
\]
which corresponds indeed to the uniform measure on $\Pfr_n$.
\end{proof}

This property of the $q$-uniform measure described in Proposition \ref{prop:phi_condition} is reminiscent from the Poissonized Plancherel measure: recall that the distribution of $\alpha$ is the Plancherel measure on $\Pfr_n$ when
\[
\Pbb(\alpha = \beta) = \frac{(f^\beta)^2}{n!},\ \forall \beta\vdash n,
\]
where $f^\beta$ denotes the number of standard Young tableaux of shape $\beta$, or equivalently the dimension of the irreducible representation of $\mathfrak{S}_n$ associated to $\beta$ (see \cite{Sag} for instance). Analogously, $\alpha$ is distributed according to the Poissonized Plancherel measure on $\Pfr$ of parameter $t$ when
\[
\Pbb(\alpha = \beta) = e^{-t} t^{\vert\beta\vert}\frac{(f^\beta)^2}{\vert\beta\vert!}.
\]
In this case the distribution of $\vert\alpha\vert$ is the Poisson measure of parameter $t$ and conditionally to $\vert\alpha\vert = n$, $\alpha$ follows the Plancherel distribution on $\Pfr_n$. Another common point between the Poissonized Plancherel and $q$-uniform measures is that in both cases, the random point process $\{\alpha_i+1/2-i\}_{i \in \mathbb Z}$ is determinantal \cite{BOO,BO}. Moreover, as observed by Borodin \cite{Bor}, there is a deeper relationship between the Poissonized Plancherel measure and $\Ufr(q)$ in the sense that both measures are special limits of a more general family of measures introduced by Nekrasov and Okounkov in the study of Seiberg--Witten theory \cite{NO}. We will introduce another family of measures in Section \ref{sec:enumeration}, coming from enumeration of surface coverings, that contain the $q$-uniform and Poissonized Plancherel measures as particular cases.

\subsection{Deviation  inequalities}\label{sec:deviation}

We require two estimates related to the generating function of $p(n)$. For any integers $n,k$ we set
\[
p_k(n)=\#\{\alpha\vdash n:\ell(\alpha)=k\},
\]
as well as
\[
p_{\leq k}(n) = p_1(n)+p_2(n)+\cdots+p_k(n).
\]
It is clear that, for $n$ fixed, the sequence $(p_{\leq k}(n))_{k\geq 1}$ is increasing, bounded, and is stationary equal to $p(n)$ for $k\geq n$. We start with the following lemma:

\begin{lemma}\label{lem:remainder_pn}
For any $M\geq 1$ and any $q\in(0,1)$, we have the following inequality:
\begin{equation}
\sum_{n\in\N} (p(n)-p_{\leq M}(n))q^n \leq 5.44 q^{M+1} \sum_{m=0}^\infty (1+ e^{\pi\sqrt{\frac{2m}{3}}}) e^{m\log q}.
\end{equation}
\end{lemma}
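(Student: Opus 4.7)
The plan is to reduce the estimate to a uniform bound on the partial sums of $p$, which can then be controlled via a sharp form of the Hardy--Ramanujan inequality.

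First, I would observe that $p(n)-p_{\leq M}(n)$ counts partitions of $n$ with at least $M+1$ parts, and vanishes when $n\leq M$. The standard bijection removing the first column, $(\alpha_1,\ldots,\alpha_r)\mapsto(\alpha_1-1,\ldots,\alpha_r-1)$, identifies partitions of $n$ with exactly $r$ parts with partitions of $n-r$ with at most $r$ parts; in particular $p_r(n)=p_{\leq r}(n-r)\leq p(n-r)$. Summing over $r\in\{M+1,\ldots,n\}$ and setting $j=n-r$ gives the combinatorial inequality
\[
p(n)-p_{\leq M}(n) \leq \sum_{j=0}^{n-M-1} p(j).
\]

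Substituting this into the generating series and shifting $n=m+M+1$ yields
\[
\sum_{n\in\N}(p(n)-p_{\leq M}(n))\,q^n \leq q^{M+1}\sum_{m\geq 0}\biggl(\sum_{j=0}^m p(j)\biggr)q^m,
\]
so that the desired conclusion reduces to the uniform estimate
\[
\sum_{j=0}^m p(j) \leq 5.44\bigl(1+e^{\pi\sqrt{2m/3}}\bigr),\quad m\geq 0.
\]
To establish this, I would invoke a sharp form of the Hardy--Ramanujan inequality such as $p(n)\leq \frac{C}{\sqrt{n}}\,e^{\pi\sqrt{2n/3}}$ for $n\geq 1$, and bound the partial sum by an integral comparison. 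The substitution $v=\sqrt{2x/3}$ transforms $\int_1^m x^{-1/2}e^{\pi\sqrt{2x/3}}\,dx$ into $\sqrt{6}\int e^{\pi v}\,dv$, showing that the cumulative sum has the same exponential growth rate as its largest term. A direct check for small $m$ (where $p(0)=p(1)=1$) then pins down the numerical constant $5.44$.

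The main obstacle is precisely this final step. A naive application of the coarser bound $p(j)\leq e^{\pi\sqrt{2j/3}}$ would give $\sum_{j=0}^m p(j)\lesssim \frac{\sqrt{6m}}{\pi}\,e^{\pi\sqrt{2m/3}}$, and the parasitic factor $\sqrt{m}$ cannot be absorbed uniformly in $m$ into a bound of the target form. It is therefore essential to exploit the $1/\sqrt{n}$ decay present in the sharper Hardy--Ramanujan estimate, which renders the rescaled integral convergent and produces a bound of exactly the shape required by the statement.
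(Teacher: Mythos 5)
Your argument is correct and reaches the stated inequality, but it routes the key estimate differently from the paper. The paper does not bound the cumulative sum $\sum_{j\le m}p(j)$ at all: it invokes Oru\c{c}'s explicit bound $p_k(n)\le \frac{5.44}{n-k}e^{\pi\sqrt{2(n-k)/3}}$ on the number of partitions into \emph{exactly} $k$ parts, and then controls $\sum_{k=M+1}^{n-1}p_k(n)$ by the crude device ``number of terms times the largest term,'' i.e.\ $(n-M-1)f(n-M-1)$ with $f(x)=\frac{5.44}{x}e^{\pi\sqrt{2x/3}}$ increasing; the $1/x$ prefactor exactly cancels the factor $(n-M-1)$, so no integral comparison is needed and the constant $5.44$ is inherited directly from the citation. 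Your column-removal bijection $p_r(n)=p_{\le r}(n-r)\le p(n-r)$ is essentially the combinatorial content hiding inside that cited bound, so your reduction to $\sum_{j=0}^m p(j)\le 5.44\bigl(1+e^{\pi\sqrt{2m/3}}\bigr)$ is a legitimate and more self-contained variant; that target inequality is indeed true (the left side is asymptotically of order $m^{-1/2}e^{\pi\sqrt{2m/3}}$), and you correctly diagnose that the coarse bound $p(j)\le e^{\pi\sqrt{2j/3}}$ cannot close it because of the parasitic $\sqrt{m}$.

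The one soft spot is the last step: you assert rather than prove that a sharp Hardy--Ramanujan bound $p(n)\le C n^{-1/2}e^{\pi\sqrt{2n/3}}$ with a small enough $C$ (one needs $C\sqrt{6}/\pi\le 5.44$, plus control of the sum-versus-integral discrepancy, since the integrand is increasing and the comparison pushes the upper limit to $m+1$) yields exactly the constant $5.44$. To make this airtight you must cite a specific explicit inequality of this form (such bounds exist in the literature, with prefactor decay even of order $1/n$), carry out the integral comparison including the boundary terms $p(0)+p(1)$, and check the finitely many small values of $m$. As written this is a plausible but unfinished computation, not a gap in the method.
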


\begin{proof} 
For any $M\geq n$, $p_{\leq M}(n)=p(n)$, we have
\[
\sum_{n\in\N} (p(n)-p_{\leq M}(n))q^n = \sum_{n\geq M+1} (p(n)-p_{\leq M}(n))q^n.
\]
Then, we know from \cite[Corollary 1]{Oru} that for any $1\leq k\leq n-1$,
\[
p_k(n)\leq \frac{5.44}{n-k}e^{\pi\sqrt{\frac{2(n-k)}{3}}}.
\]
It follows that for any $n\geq M+1$,
\[
\sum_{k=M+1}^{n-1}p_k(n) \leq \sum_{k=M+1}^{n-1} f(n-k),
\]
where $f(x) = \frac{5.44}{x}e^{\pi\sqrt{\frac{2x}{3}}}$. By a change of variable $\ell=n-k$, it reads
\[
\sum_{k=M+1}^{n-1}p_k(n) \leq \sum_{\ell = 1}^{n-M-1} f(\ell).
\]
However, as $f$ is increasing on $[1,\infty)$, we can bound each term from above by $f(n-M-1)$. We finally obtain
\[
\sum_{k=M+1}^{n-1}p_k(n)\leq (n-M-1)f(n-M-1).
\]
As a conclusion,
\[
p(n)-p_{\leq M}(n) = p_n(n)+ \sum_{k=M+1}^{n-1}p_k(n)\leq 1 + (n-M-1)f(n-M-1).
\]
We can put it back in the sum over $n$, yielding
\[
\sum_{n\in\N} (p(n)-p_{\leq M}(n))q^n \leq \sum_{n\geq M+1} (1+(n-M-1)f(n-M-1))q^n.
\]
We can perform a change of variable $m = n-M-1$:
\[
\sum_{n\in\N} (p(n)-p_{\leq M}(n))q^n \leq \sum_{m\geq 0} (1+ m f(m))q^{m+M+1} = q^{M+1}\sum_{m\geq 0} (1+m f(m))q^m.
\]
\end{proof}

Note that the sum in the previous lemma is  finite for $|q|<1$ by the integral test for convergence. In the following lemma, we make clear that it decays exponentially in terms of $M$.

\begin{lemma}\label{lem:remainder_pn2} 
For any $q\in(0,1)$, there exists a constant $C_q$ such that, for any $M \ge 1,$ 
\begin{equation}
\sum_{n=M+1}^\infty p(n)q^n \leq C_q\cdot e^{\frac{1}{2} M\log q}.
\end{equation}
\end{lemma}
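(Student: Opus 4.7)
The plan is to exploit the identity $\sum_{n \ge 0} p(n) x^n = \phi(x)^{-1}$, valid for all $x \in (0,1)$ and in particular at $x = \sqrt{q}$. Since $\sqrt{q} \in (0,1)$, the series $\sum_{n \ge 0} p(n) q^{n/2}$ is summable and equals $\phi(\sqrt{q})^{-1}$. The core idea is then a simple factorization trick: writing $q^n = q^{n/2} \cdot q^{n/2}$, and using that $q < 1$ implies $q^{n/2} \le q^{(M+1)/2}$ for every $n \ge M+1$, one can extract this common factor from the tail sum.

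Carrying this out, I would estimate
\begin{equation*}
\sum_{n = M+1}^{\infty} p(n) q^n \;\le\; q^{(M+1)/2} \sum_{n = M+1}^{\infty} p(n) q^{n/2} \;\le\; q^{(M+1)/2} \sum_{n = 0}^{\infty} p(n) (\sqrt{q})^n \;=\; \frac{q^{(M+1)/2}}{\phi(\sqrt{q})},
\end{equation*}
which yields the desired inequality with $C_q := \sqrt{q}\,\phi(\sqrt{q})^{-1}$, since $q^{M/2} = e^{\frac{1}{2} M \log q}$. There is no real obstacle in this argument: the only point to check is that $\phi(\sqrt{q})$ is finite and nonzero, which is immediate from $0 < \sqrt{q} < 1$. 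One could alternatively try to combine the preceding Lemma \ref{lem:remainder_pn} with the trivial bound $p_{\le M}(n) \le p(n)$ to recover exponential decay in $M$, but the direct factorization above is shorter and gives a clean explicit constant, which is all that is needed for the applications to come.
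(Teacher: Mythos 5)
Your proof is correct, and it takes a genuinely different (and simpler) route than the paper. The paper's argument invokes the Hardy--Ramanujan-type bound $p(n) < e^{c\sqrt{n}}$ with $c=\pi\sqrt{2/3}$, reindexes the tail, splits $\sqrt{n+M}\le \sqrt{n}+\sqrt{M}$, and then absorbs the resulting factor $e^{c\sqrt{M}}$ into $T_q e^{\frac12 M\log q}$ to produce a non-explicit constant $C_q = S_q T_q$. Your factorization $q^n = q^{n/2}\cdot q^{n/2}$ sidesteps any pointwise estimate on $p(n)$ entirely: you only need the generating-function identity $\sum_n p(n)x^n = \phi(x)^{-1}$ evaluated at $x=\sqrt{q}$, which the paper has already recorded, and you get the clean explicit constant $C_q=\sqrt{q}\,\phi(\sqrt{q})^{-1}$. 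Both arguments ultimately rest on the subexponential growth of $p(n)$ (yours implicitly, via convergence of the series at $\sqrt{q}$; the paper's explicitly), but yours is shorter, more self-contained, and the explicit constant would be equally usable downstream (e.g.\ in Proposition \ref{prop:tail_pt} and Lemma \ref{lem:concentration_pow}, where only the existence of some $C_q$ matters). The one point you flag --- that $\phi(\sqrt{q})$ is finite and nonzero --- is indeed immediate, so there is no gap.
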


\begin{proof}
We use the standard upper bound
\[
p(n) < e^{c\sqrt{n}},\ \forall n\geq 1,
\]
with $c=\pi\sqrt{2/3}$ (see \cite[Theorem 14.5]{Apo} for instance). It yields
\[
\sum_{n=M+1}^\infty p(n)q^n \leq \sum_{n=M+1}^\infty e^{c\sqrt{n}+n\log q}.
\]
By a direct change of index, we have
\[
\sum_{n=M+1}^\infty p(n)q^n \leq \sum_{n=1}^\infty e^{c\sqrt{n+M}+(n+M)\log q}.
\]

As $\sqrt{n+M} \le \sqrt n + \sqrt M,$ we have that 
\[
\sum_{n=M+1}^\infty p(n)q^n \leq e^{M\log q + c\sqrt M} \sum_{n \ge 1} e^{n\log q + c\sqrt n} 
\]
We denote by $S_q := \sum_{n \ge 1} e^{n\log q + c\sqrt n},$ the sum being finite when $|q|<1.$
We conclude by observing that there exists $T_q$ such that for any $M\ge 1,$
\[ e^{M\log q + c\sqrt M} \le  T_q e^{\frac{1}{2}M\log q }, \]
and the lemma holds with $C_q = S_qT_q.$
\end{proof}

As a direct consequence, let us start with some simple deviation inequalities for partitions with distribution $\Ufr(q)$:

\begin{proposition}\label{prop:tail_pt}
Let $q\in(0,1)$ be fixed, and $\alpha\sim\Ufr(q)$ be a random partition. 
Then there exists a constant $C_q$ such that 
\[ 
\Pbb(\ell(\alpha) > \lfloor(N-1)/2\rfloor) \leq C_q e^{\frac{N}{2} \log q}
\]
and 
\[ 
\Pbb(\vert\alpha\vert > N) \leq C_q e^{\frac{N}{2} \log q}.
\]

Consequently, for any $p\in\N$, there are constants $C_1(p,q),C_2(p,q)>0$ such that
\begin{equation}
\Pbb(\ell(\alpha) > \lfloor(N-1)/2\rfloor) \leq C_1(p,q) N^{-p},
\end{equation}
and
\begin{equation}
\Pbb(\vert\alpha\vert > N) \leq C_2(p,q) N^{-p}.
\end{equation}
\end{proposition}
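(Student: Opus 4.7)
The plan is to read off both tail bounds as almost immediate consequences of Lemmas~\ref{lem:remainder_pn} and \ref{lem:remainder_pn2}, using the explicit form of $\Ufr(q)$, and then to deduce the polynomial bounds from the obvious fact that an exponentially decaying sequence is eventually dominated by any negative power.

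\medskip

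\textbf{Bound on $|\alpha|$.} Since $\Pbb(\alpha = \beta) = \phi(q) q^{|\beta|}$ and there are $p(n)$ partitions of size $n$, I would write
\[
\Pbb(|\alpha| > N) \;=\; \phi(q) \sum_{n = N+1}^{\infty} p(n) q^{n},
\]
and apply Lemma~\ref{lem:remainder_pn2} with $M = N$, which gives the bound $\phi(q)\, C_q\, e^{\frac{N}{2}\log q}$. Absorbing the factor $\phi(q)$ into the constant yields the first inequality.

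\medskip

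\textbf{Bound on $\ell(\alpha)$.} I would similarly compute
\[
\Pbb\bigl(\ell(\alpha) > M\bigr) \;=\; \phi(q) \sum_{n \in \N} \bigl(p(n) - p_{\leq M}(n)\bigr) q^{n},
\]
apply Lemma~\ref{lem:remainder_pn}, and note that the sum $\sum_{m=0}^\infty (1 + e^{\pi \sqrt{2m/3}}) e^{m \log q}$ converges since $\log q < 0$, giving a constant depending only on $q$. Setting $M = \lfloor (N-1)/2 \rfloor$, a small parity check shows that $M + 1 \geq N/2$ both for $N$ even (where $M = N/2 - 1$) and $N$ odd (where $M = (N-1)/2$). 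Since $q \in (0,1)$ and $q^a$ is decreasing in $a$, this gives $q^{M+1} \leq q^{N/2} = e^{\frac{N}{2} \log q}$, producing the stated bound after adjusting the constant.

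\medskip

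\textbf{Polynomial bounds.} For any $p \in \N$, since $\log q < 0$, we have $N^{p} e^{\frac{N}{2}\log q} \to 0$ as $N \to \infty$, so this quantity is bounded uniformly in $N \geq 1$ by some constant $K(p,q)$. This immediately gives $e^{\frac{N}{2}\log q} \leq K(p,q) N^{-p}$, from which the polynomial decay for both probabilities follows by multiplying the earlier constants.

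\medskip

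There is no real obstacle here: the statement is essentially a repackaging of Lemmas~\ref{lem:remainder_pn} and \ref{lem:remainder_pn2}, and the only non-completely-mechanical step is the parity check for $\lfloor(N-1)/2\rfloor + 1 \geq N/2$, which is straightforward.
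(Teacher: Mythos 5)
Your proof is correct and follows essentially the same route as the paper: both tail bounds are obtained by writing the probabilities explicitly under $\Ufr(q)$ and invoking Lemmas~\ref{lem:remainder_pn} and~\ref{lem:remainder_pn2} respectively, with the same parity check that $\lfloor(N-1)/2\rfloor+1\ge N/2$, and the polynomial bounds follow by the same trivial comparison of exponential and polynomial decay.
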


\begin{proof}
We start by writing the expression of the probability that $2\ell(\alpha)>N$.
\[
\Pbb(2\ell(\alpha) > N) = \phi(q)\sum_{\substack{\alpha\in\Pfr\\ 2\ell(\alpha)>N}} q^{\vert\alpha\vert} = \phi(q) \sum_{n\geq 0} q^n \sum_{\substack{\alpha\vdash n\\ 2\ell(\alpha)>N}}1.
\]
The sum over $\alpha$ is nothing else than $p(n)-p_{\leq \lfloor (N-1)/2\rfloor}(n)$. We can apply Lemma \ref{lem:remainder_pn} to get that 
\[
\Pbb(2\ell(\alpha) > N)  \le \phi(q) 5.44 q^{\lfloor (N-1)/2\rfloor +1}
\sum_{m \ge 0} (1+e^{\pi \sqrt{\frac{2m}{3}}})e^{m \log q}.
\]
As the sum is finite and $\lfloor (N-1)/2\rfloor +1 \ge \frac{N}{2},$ we get the first inequality.

The second one is very similar :
\[
\Pbb( \vert\alpha\vert > N) \le \phi(q) \sum_{n \ge N+1} q^n p(n)
\]
and we proceed similarly, by using Lemma \ref{lem:remainder_pn2} instead of Lemma \ref{lem:remainder_pn}.
\end{proof}

\section{The Gaussian measure on $\widehat{\U}(N)$}\label{sec:Gfr}

In this section, we will study the measure $\Gfr_N(q)$ on two different aspects. First, we shall describe the specific case of $N=1$ where all moments can be computed explicitly, then we will prove that in the general case a highest weight of law $\Gfr_N(q)$ is a coupling of two random $q$-uniform partitions and a highest weight of $\U(1)$ with law $\Gfr_1(q)$. We will also prove Theorem \ref{thm:cv_distrib} which can be interpreted as an asymptotic decoupling.

\subsection{Definition of $\Gfr_N(q)$}

 One of the most important results in the representation theory of compact Lie groups is the celebrated Peter--Weyl theorem, which states that the characters of irreducible representations of such groups form a Hilbert basis of the space of square integrable central functions on $G$. Moreover, if we denote by $\Delta_G$ the Laplace--Beltrami operator on $G$ and by $\widehat{G}$ the set of equivalence classes of irreducible representations of $G$ (called \emph{dual} of $G$), we have the following:

\begin{theorem}
Let $G$ be a compact Lie group. The set of irreducible characters $\{\chi_\lambda,\lambda\in\widehat{G}\}$ is a Hilbert basis of eigenfunctions of $\Delta_G$, and there is a family of nonnegative numbers,  $(c_2(\lambda),\lambda\in\widehat{G}),$ called the (quadratic) Casimir numbers, such that
\[
\Delta_G\chi_\lambda = -c_2(\lambda)\chi_\lambda,\ \forall \lambda\in\widehat{G}.
\]
\end{theorem}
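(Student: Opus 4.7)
The plan is to deduce the statement in two steps, each one being a fairly classical application of the Peter--Weyl theorem, combined with the invariance properties of $\Delta_G$.

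First, I would establish that $\{\chi_\lambda : \lambda \in \widehat{G}\}$ is a Hilbert basis of the space of square-integrable central functions on $G$. The Peter--Weyl theorem provides the orthogonal decomposition
\[
L^2(G) = \bigoplus_{\lambda \in \widehat{G}} V_\lambda \otimes V_\lambda^*,
\]
where each summand is spanned by the matrix coefficients of an irreducible representation of class $\lambda$, viewed as a subrepresentation of $L^2(G)$ under the $G\times G$ action by left and right translations. Central (i.e. conjugation-invariant) functions correspond, inside each summand, to the subspace fixed by the diagonal $G$-action on $V_\lambda \otimes V_\lambda^*\simeq\mathrm{End}(V_\lambda)$, which by Schur's lemma is one-dimensional and generated precisely by (a multiple of) the character $\chi_\lambda$. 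Together with the Schur orthogonality relations, this yields the orthonormality of the $\chi_\lambda$ and the claim that they span the class-function subspace of $L^2(G)$.

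Second, I would show that each $\chi_\lambda$ is an eigenfunction of $\Delta_G$. Since $\Delta_G$ is defined from a bi-invariant Riemannian metric on $G$, it commutes with both left and right translations, hence with the $G\times G$ action on $L^2(G)$. Consequently, it preserves each Peter--Weyl summand $V_\lambda\otimes V_\lambda^*$. This summand is irreducible under $G\times G$, so by Schur's lemma $\Delta_G$ acts on it by a scalar, which we denote $-c_2(\lambda)$. The character $\chi_\lambda$ lies in this summand, so $\Delta_G\chi_\lambda = -c_2(\lambda)\chi_\lambda$. The nonnegativity of $c_2(\lambda)$ follows from the general fact that, on a compact Riemannian manifold, $-\Delta_G$ is a nonnegative symmetric operator on $L^2$: integrating by parts gives $-\langle \Delta_G f, f \rangle = \|\nabla f\|_{L^2}^2 \ge 0$, and applying this to $f=\chi_\lambda$ yields $c_2(\lambda) \ge 0$.

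The main subtlety I anticipate is not in any single step but in the identification of the scalar $c_2(\lambda)$ with an algebraic quantity computed from the highest weight, which is needed to match the explicit formula used throughout the paper (for example $c_2(\lambda) = \tfrac{1}{N}\sum_i \lambda_i(\lambda_i+N+1-2i)$ when $G=\U(N)$). This requires identifying $\Delta_G$, via the bi-invariant metric $\langle X,Y\rangle = N\Tr(XY^*)$ on the Lie algebra, with the action of the quadratic Casimir element in the enveloping algebra, and then invoking the Harish-Chandra formula for the Casimir eigenvalue. This computation, though standard, is the only place where the particular choice of inner product on $\mathfrak{g}$ enters, and would be deferred to the unitary case where it is needed explicitly.
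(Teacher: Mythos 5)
Your proof is correct and is the standard argument (Peter--Weyl plus bi-invariance of $\Delta_G$ and Schur's lemma on each isotypic block, with nonnegativity from integration by parts); the paper itself states this theorem as a recalled classical fact and offers no proof, so there is nothing to diverge from. Your closing remark rightly identifies the only nontrivial input for the rest of the paper, namely the Harish-Chandra/Casimir computation giving the explicit formula $c_2(\lambda)=\tfrac1N\sum_i\lambda_i(\lambda_i+N+1-2i)$ for $\U(N)$ with the metric $\langle X,Y\rangle=N\Tr(XY^*)$, which the paper likewise quotes without derivation.
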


In the case of $G=\U(N)$, the irreducible representations can be labelled by tuples of integers $\lambda=(\lambda_1,\ldots,\lambda_N)\in\Z^N$ such that $\lambda_1\geq\cdots\geq\lambda_N$, called \emph{highest weights}. 
If we equip the Lie algebra $\mathfrak{u}(N)$ with the metric given by the Killing form
\[
\langle X,Y\rangle = N\Tr(XY^*),\ \forall X,Y\in\mathfrak{u}(N),
\]
then the Casimir numbers are given as follows:
\begin{equation}
c_2(\lambda) = \frac{1}{N}\sum_{i=1}^N\lambda_i(\lambda_i+N+1-2i),\  \forall \lambda\in\widehat{\U}(N).
\end{equation}
We will see that it is a natural scaling for asymptotic analysis.

As explained in the introduction, we are interested in the measure $\Gfr_N(q)$ defined as follows, for $q\in(0,1)$: a random highest weight $\lambda\in\widehat{\U}(N)$ is distributed according to $\Gfr_N(q)$ if
\begin{equation} \label{def:GNQ}
\Pbb(\lambda=\mu) = \frac{1}{Z_N(q)}q^{c_2(\mu)},\ \forall \mu\in\widehat{\U}(N).
\end{equation}
If we set $q_t=e^{-\frac{t}{2}},$ we see that $\Gfr_N(q_t)$ is actually a discrete Gaussian measure on $\widehat{\U}(N).$ Indeed, seeing the set of highest weights as a discrete subset of the Weyl chamber
\[
\mathcal{C}_N=\{(x_1,\ldots,x_N)\in\R^N: x_1\geq\cdots\geq x_N\},
\]
the measure $\Gfr_N(q)$ can be seen as a measure on $\mathcal{C}_N$ with discrete support. If we denote by $\rho$ the element of $\mathcal{C}_N$ such that, for any $1 \leq i \leq N,$
$\rho_i = i - \frac{N+1}{2},$ then 
\[
\Pbb(\lambda=\mu)  \propto e^{-\frac{t}{2}\Vert\mu-\rho\Vert^2},
\]
hence it is indeed a kind of \emph{discrete Gaussian measure} in the Weyl chamber.

\subsection{The case $N=1$}

The measure $\Gfr_1(q)$, as explained in the next section, is involved in the study of the general measure $\Gfr_N(q)$ by a coupling with random partitions. The highest weights of $\U(1)$ are just integers, and for any $n\in\Z,$ its Casimir number is equal to $n^2$. It means that $n\sim\Gfr_1(q)$ if and only if
\begin{equation}
\Pbb(n=m) = \frac{1}{Z_1(q)}q^{m^2}, \ \forall m\in \Z.
\end{equation}
The partition function is the Jacobi theta function
\[
Z_1(q)=\theta(q):=\sum_{n\in\Z} q^{n^2},
\]
which is analytic for $\vert q\vert <1$. The measure $\Gfr_1(q)$ has been used by Borodin \cite{Bor} as a way to modify the periodic Schur process in order to obtain a determinantal point process. We will only consider the case where $q\in(0,1)$ in order to keep positive measures. The moments and cumulants of this distribution have already been described by Romik \cite{Rom} in the case $q=e^{-\pi}$ and by Wakhare and Vignat \cite{WV} in the general case, yielding deep formulas in terms of elliptic integrals, and the reader interested in the interplay between probabilistic and number-theoretic aspects of the theta function can have an eye on \cite{BPY,SV}. We shall provide a simple alternative expression of the moments obtained by Wakhare and Vignat. It involves the differential operators $D^n=\left(x\frac{d}{dx}\right)^n$ for all $n\geq 0$, which are related to the \emph{infinite wedge representation} of the Lie algebra $\mathcal{D}$ of differential operators on the unit circle \cite{BO}.

\begin{proposition}\label{prop:is_it_wick}
For any $q\in(0,1)$ and $k\geq 0$, if $n\sim\Gfr_1(q)$ we have
\begin{equation}\label{eq:theta_moment_even}
\E[n^{2k}] = \frac{D^{k}\theta(q)}{\theta(q)}, \ \text{and} \ \ \E[n^{2k+1}] = 0.
\end{equation}
\end{proposition}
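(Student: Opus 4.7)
The plan is to prove both equalities in Proposition \ref{prop:is_it_wick} directly from the definition of $\Gfr_1(q)$ and the series expansion of $\theta$.

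For the vanishing of odd moments, I would simply invoke symmetry: the map $m \mapsto -m$ is a bijection of $\Z$ that preserves the weight $q^{m^2}$, so the distribution of $n \sim \Gfr_1(q)$ is symmetric around $0$. Hence $\E[n^{2k+1}] = 0$ for all $k \geq 0$, provided these moments exist, which is guaranteed by the super-exponential decay of $q^{m^2}$ for $|q|<1$.

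For the even moments, the key observation is that $D = x\frac{d}{dx}$ acts on each monomial by $D(x^{m^2}) = m^2 x^{m^2}$, so iteratively $D^k(x^{m^2}) = m^{2k} x^{m^2}$. I would justify termwise differentiation of the series $\theta(q) = \sum_{m\in\Z} q^{m^2}$ by observing that the partial sums converge uniformly on compact subsets of the open unit disk, as do all their derivatives, since $q^{m^2}$ decays faster than any power. This yields
\[
D^k \theta(q) = \sum_{m\in\Z} m^{2k} q^{m^2}.
\]
Dividing by $\theta(q) = Z_1(q)$ and recognising the right-hand side as $\E[n^{2k}]$ by the explicit formula for $\Gfr_1(q)$ given in \eqref{def:GNQ} immediately gives the desired identity.

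There is no real obstacle here; the statement is essentially a restatement of the fact that the generating function of even moments of $\Gfr_1(q)$ can be obtained by applying powers of the Euler operator $D$ to $\theta$. The only mild point to address carefully is the interchange of differentiation and summation, which is standard for lacunary series like $\theta$ on $|q|<1$.
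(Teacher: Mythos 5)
Your proof is correct and is exactly the elementary argument the paper has in mind (the paper omits an explicit proof, treating the statement as immediate): odd moments vanish by the symmetry $m\mapsto -m$, and the even moments follow from $D^k(q^{m^2})=m^{2k}q^{m^2}$ together with termwise differentiation of the power series $\theta$ inside the unit disk. Nothing is missing.
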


It would be satisfactory to interpret Proposition \ref{prop:is_it_wick} as a discrete analog of the Wick formula for the Gaussian distribution, and we conjecture that its even moments are in fact counting functions of combinatorial objects. Surprisingly, we will find similar expressions of moments of another measure in Proposition \ref{prop:moments_frob}, where the theta function is replaced by some generating series.

We now go to the study of the measure $\Gfr_N(q)$ in the large $N$ limit.

\subsection{The general case: from highest weights to partitions and back}\label{sec:hw_partitions}

The main issue with the current expression  \eqref{def:GNQ} of $\Gfr_N(q)$ is that not only the summand depends on $N$, but also does the set $\widehat{\U}(N)$, which is not convenient for letting $N\to\infty$. A better way to deal with this issue is to rewrite $\widehat{\U}(N)$ in terms of partitions, using a decomposition of highest weights that has been widely used in the last decades \cite{Ste,Sta,Koi,GT,GT2,Lem,Mag,Lem3}, but seems to be already known since the works of Littlewood \cite{Lit2, Lit1}.  We recall the following result, that already appears in \cite{Lem} in a different form.

\begin{proposition}\label{prop:bijection_hw_partitions}
For any $N\geq 1$, there is a bijection
\[
\Phi_N:\widehat{\U}(N)\build{\longrightarrow}{}{\sim}\Lambda_N,
\]
where 
\[
\Lambda_N = \{(\alpha,\beta,n)\in\Pfr^2\times\Z:\ell(\alpha)\leq A_N,\ell(\beta)\leq B_N\}.
\]
has been defined in \eqref{eq:ANBN}.
\end{proposition}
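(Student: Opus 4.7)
The plan is to write the bijection and its inverse explicitly and verify that they are mutually inverse. First, I would formalize the construction from Figure~\ref{fig:decomp_highest_weight}: given $(\alpha, \beta, n) \in \Lambda_N$, pad $\alpha$ (resp.\ $\beta$) with zeros so as to have length $A_N$ (resp.\ $B_N$) and set
\[
\lambda_N(\alpha, \beta, n) := (\alpha_1 + n, \ldots, \alpha_{A_N} + n,\ n,\ n - \beta_{B_N}, \ldots, n - \beta_1).
\]
Using $A_N + 1 + B_N = N$ from \eqref{eq:ANBN}, this is an $N$-tuple, and it is nonincreasing because $\alpha_{A_N} \ge 0 \ge -\beta_{B_N}$ while $\alpha$ and $\beta$ are themselves nonincreasing. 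Hence $\lambda_N(\alpha, \beta, n) \in \widehat{\U}(N)$.

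For the converse direction, I would define $\Phi_N \colon \widehat{\U}(N) \to \Pfr^2 \times \Z$ by $\Phi_N(\lambda) = (\alpha, \beta, n)$, where
\[
n := \lambda_{A_N + 1}, \quad \alpha := (\lambda_1 - n, \ldots, \lambda_{A_N} - n), \quad \beta := (n - \lambda_N,\ n - \lambda_{N-1},\ \ldots,\ n - \lambda_{A_N + 2}),
\]
with $\alpha$ and $\beta$ regarded as partitions after trimming trailing zeros. The nonincreasingness of $\lambda$ makes both sequences nonincreasing; the inequality $\lambda_i \ge \lambda_{A_N+1} = n$ for $i \le A_N$ forces $\alpha_i \ge 0$, and $\lambda_i \le \lambda_{A_N+1}$ for $i \ge A_N + 2$ forces $\beta_j \ge 0$. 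The length bounds $\ell(\alpha) \le A_N$ and $\ell(\beta) \le B_N$ hold by construction, so $\Phi_N$ indeed takes values in $\Lambda_N$.

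The last step is to check that $\Phi_N$ and $\lambda_N$ are mutually inverse, which is a coordinate-by-coordinate verification using the explicit formulas above: in $\Phi_N \circ \lambda_N$, reading off entry $A_N+1$ of $\lambda_N(\alpha, \beta, n)$ recovers $n$, entries $1, \ldots, A_N$ recover $\alpha$, and entries $A_N+2, \ldots, N$ recover $\beta$; the reverse composition $\lambda_N \circ \Phi_N$ just substitutes the definitions back in. I do not expect any real obstacle here: the only bookkeeping is to verify $A_N + 1 + B_N = N$ in both parities of $N$ and to be careful that partitions are identified up to trailing zeros, so that the padding in the forward construction matches the trimming in the reverse one. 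The content is essentially combinatorial and amounts to formalizing the picture in Figure~\ref{fig:decomp_highest_weight}.
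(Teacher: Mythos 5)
Your proof is correct and follows essentially the same route as the paper: both split $\lambda$ at the pivot index $\lfloor (N+1)/2\rfloor = A_N+1$, set $n=\lambda_{A_N+1}$, and read off $\alpha$ and $\beta$ as the (trimmed) differences above and below the pivot, with the length constraints $\ell(\alpha)\leq A_N$, $\ell(\beta)\leq B_N$ guaranteeing the decomposition is unique. Your explicit padding/trimming bookkeeping and the mutual-inverse check are just a slightly more detailed phrasing of the paper's argument.
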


\begin{proof}
The bijection consists in fixing $n=\lambda_{\lfloor (N+1)/2\rfloor}\in\Z$ and seeing the left part of $\lambda$ as $\alpha+n=(\alpha_1+n,\ldots,\alpha_r+n)$ and the right part as $n-\beta,$ as shown in Figure \ref{fig:decomp_highest_weight} from the introduction. We shall describe this bijection in more details: let $k=\lfloor (N+1)/2\rfloor$ and for any $\lambda=(\lambda_1,\ldots,\lambda_N)\in\widehat{\U}(N)$  denote by $\ell_1$ the greatest integer such that $\lambda_{\ell_1}>\lambda_k$ and $\ell_2$ the smallest integer such that $\lambda_{\ell_2}<\lambda_k$\footnote{It is important that $\ell(\alpha)<k$ and $\ell(\beta)\le N-k$, otherwise there would be several possible decompositions, by shifting $n$ and either $\alpha$ or $\beta$.}. Then, we set $\alpha_i=\lambda_i-\lambda_k$ for $1\leq i\leq\ell_1$ and $\beta_j = \lambda_k -\lambda_{N-j-1},$ for any $1 \le j \le \ell_2.$ We obtain
\begin{equation}\label{eq:lambda_odd}
\lambda = (\alpha_1+\lambda_k,\ldots,\alpha_{\ell_1}+\lambda_k,\underbrace{\lambda_k,\ldots,\lambda_k}_{N-\ell_1-\ell_2},\lambda_k-\beta_{\ell_2},\ldots,\lambda_k-\beta_1),
\end{equation}
where $\alpha=(\alpha_1,\ldots,\alpha_{\ell_1})$ and $\beta=(\beta_1,\ldots,\beta_{\ell_2})$ are indeed integer partitions. Furthermore, we also have by construction
\[
\ell_1\leq k-1=\lfloor(N+1)/2\rfloor-1,\text{ and } \ell_2\leq N-k = N-\lfloor (N+1)/2\rfloor.
\]
We  set $\Phi_N(\lambda)=(\alpha,\beta,\lambda_k)$ where $\alpha=(\alpha_1,\ldots,\alpha_{\ell_1})$ and $\beta = (\beta_1,\ldots,\beta_{\ell_2})$. The expression of $\alpha_i$ and $\beta_i$ yields the injectivity of $\Phi_N$. Reciprocally, from $(\alpha,\beta,n)\in\Lambda_N$ we can construct $\lambda\in\widehat{\U}(N)$ with \eqref{eq:lambda_odd}, which yields the surjectivity of $\Phi_N$, hence $\Phi_N$ is indeed a bijection.
\end{proof}

In the sequel, given a pair of partitions $(\alpha,\beta)$ and an integer $n$, for any $N$ such that $(\alpha,\beta,n) \in \Lambda_N,$ we will denote by $\lambda_N(\alpha,\beta,n)=\Phi_N^{-1}(\alpha,\beta,n)$ the highest weight constructed through $\Phi_N^{-1}.$ A crucial result is that $c_2(\lambda)$ has then an explicit expression as a function of $\alpha, \beta, n$ and $N$ that will be convenient for asymptotic analysis. This expression involves the total content $K$ introduced in \eqref{def:content}.

\begin{proposition}[\cite{Lem}]\label{prop:cas_alphabeta}
For any $N \ge 1$ and $(\alpha,\beta,n) \in \Lambda_N,$  we have
\begin{equation}\label{eq:cas_alphabeta_TL}
c_2(\lambda_N(\alpha,\beta,n)) = \vert\alpha\vert+\vert\beta\vert + n^2 + \frac{2}{N}F(\alpha,\beta,n),
\end{equation}
where
\[
F(\alpha,\beta,n) = K(\alpha)+K(\beta)+n(\vert\alpha\vert-\vert\beta\vert).
\]
\end{proposition}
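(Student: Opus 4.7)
The plan is to proceed by a direct computation, using the explicit piecewise form of $\lambda = \lambda_N(\alpha,\beta,n)$ given by \eqref{eq:lambda_odd}. Setting $\ell_1 = \ell(\alpha)$ and $\ell_2 = \ell(\beta)$, I would split
\[
N\, c_2(\lambda) = \sum_{i=1}^N \lambda_i^2 + \sum_{i=1}^N \lambda_i(N+1-2i)
\]
and evaluate each sum by decomposing the range $\{1,\ldots,N\}$ into the three blocks dictated by $\Phi_N^{-1}$: the ``$\alpha$-block'' where $\lambda_i = \alpha_i + n$, the constant middle block where $\lambda_i = n$, and the ``$\beta$-block'' where $\lambda_{N-j+1} = n - \beta_j$.

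For the quadratic sum, expanding $(\alpha_i+n)^2$ and $(n-\beta_j)^2$ and collecting the $n^2$ terms coming from the middle block yields directly
\[
\sum_{i=1}^N \lambda_i^2 = \sum_i \alpha_i^2 + \sum_j \beta_j^2 + 2n(|\alpha|-|\beta|) + N n^2.
\]
For the linear sum, set $c_i := N+1-2i$; then $\sum_i c_i = 0$ kills the constant-$n$ contributions, and the elementary antisymmetry $c_{N-j+1} = -c_j$ allows, after reindexation, to rewrite the $\beta$-block as $\sum_{j=1}^{\ell_2} \beta_j (N+1-2j)$. The key identity is then
\[
\sum_{i=1}^{\ell(\alpha)} i\,\alpha_i = \tfrac{1}{2}\sum_i \alpha_i^2 + \tfrac{1}{2}|\alpha| - K(\alpha),
\]
which follows immediately from $K(\alpha) = \sum_{i=1}^{\ell(\alpha)}\sum_{j=1}^{\alpha_i}(j-i)$. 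Substituting it gives $\sum_i \alpha_i(N+1-2i) = N|\alpha| - \sum_i \alpha_i^2 + 2K(\alpha)$, and the analogous formula holds for $\beta$.

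Adding the two sums, the $\sum \alpha_i^2$ and $\sum \beta_j^2$ terms cancel exactly, leaving
\[
N\, c_2(\lambda) = N(|\alpha|+|\beta|+n^2) + 2\bigl(K(\alpha)+K(\beta)+n(|\alpha|-|\beta|)\bigr),
\]
which is the announced formula after dividing by $N$. No serious obstacle is expected: the computation is bookkeeping-driven. The only points where care is needed are the reindexation of the $\beta$-block (to exploit the antisymmetry of $c_i$ around the midpoint $i=(N+1)/2$) and checking that the formula is indeed uniform in $N$, even though the bijection $\Phi_N$ depends on the parity of $N$ through $k = \lfloor (N+1)/2\rfloor$. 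This parity issue does not affect the final expression because the middle block contributes to the linear sum only through the constant $n$, which is absorbed by $\sum_i c_i = 0$.
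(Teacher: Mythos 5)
Your computation is correct: the split of $N c_2(\lambda)$ into the quadratic and linear sums, the cancellation of the $n$-terms via $\sum_{i=1}^N(N+1-2i)=0$, the antisymmetry $c_{N-j+1}=-c_j$ for the $\beta$-block, and the identity $\sum_i i\alpha_i=\tfrac12\sum_i\alpha_i^2+\tfrac12|\alpha|-K(\alpha)$ all check out, and the $\sum_i\alpha_i^2$, $\sum_j\beta_j^2$ terms do cancel between the two sums to give exactly $N(|\alpha|+|\beta|+n^2)+2F(\alpha,\beta,n)$. The paper itself gives no proof of this proposition (it is imported from \cite{Lem}), so there is nothing internal to compare against; your direct verification from the explicit form \eqref{eq:lambda_odd} is the natural argument and is complete, including the observation that the parity of $N$ plays no role in the final formula.
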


We are now ready to exhibit the relationship between $\Gfr_N(q)$, $\Gfr_1(q)$ and $\Ufr(q)$. It can be roughly stated as follows: the pushforward of $\Gfr_N(q)$ by the map $\Phi_N$ is absolutely continuous with respect to the product measure $\Ufr(q)^{\otimes 2}\otimes\Gfr_1(q)$, with an explicit density. Here is the exact result:

\begin{proposition}\label{prop:transfert}
Let $N\geq 1$ be an integer and $q\in(0,1)$ be a real number. If $\lambda\sim\Gfr_N(q),$ for any bounded measurable $f:\Lambda_N\to\R$,  we have
\begin{equation}\label{eq:transfert}
\E[f\circ\Phi_N(\lambda)] = \frac{\theta(q)}{Z_N(q)\phi(q)^2}\E\left[f(\alpha,\beta,n)q^{\frac{2}{N}F(\alpha,\beta,n)}\mathbf{1}_{\Lambda_N}(\alpha,\beta,n)\right],
\end{equation}
where $\alpha,\beta,n$ are independent, $\alpha,\beta\sim\Ufr(q)$ and $n\sim\Gfr_1(q)$.
\end{proposition}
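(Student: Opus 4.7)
The statement is essentially a repackaging of the definition of $\Gfr_N(q)$ through the bijection $\Phi_N$ of Proposition \ref{prop:bijection_hw_partitions} and the Casimir identity of Proposition \ref{prop:cas_alphabeta}. The whole proof is therefore a single algebraic manipulation, with the only care required being the bookkeeping of normalising constants.

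The plan is to start from the very definition of $\Gfr_N(q)$ to write
\[
\E[f\circ\Phi_N(\lambda)] = \frac{1}{Z_N(q)}\sum_{\mu\in\widehat{\U}(N)} f(\Phi_N(\mu))\,q^{c_2(\mu)},
\]
then use the bijection $\Phi_N:\widehat{\U}(N)\to\Lambda_N$ to reindex the sum over triples $(\alpha,\beta,n)\in\Lambda_N$, replacing $\mu$ by $\lambda_N(\alpha,\beta,n)=\Phi_N^{-1}(\alpha,\beta,n)$. Next, I would apply Proposition \ref{prop:cas_alphabeta} to factor the exponent of $q$:
\[
q^{c_2(\lambda_N(\alpha,\beta,n))} = q^{|\alpha|}\cdot q^{|\beta|}\cdot q^{n^2}\cdot q^{\frac{2}{N}F(\alpha,\beta,n)}.
\]

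The final step is to recognise the three factors $q^{|\alpha|}$, $q^{|\beta|}$, $q^{n^2}$ as the (unnormalised) densities of the measures $\Ufr(q)$, $\Ufr(q)$ and $\Gfr_1(q)$ respectively. Writing
\[
q^{|\alpha|}q^{|\beta|}q^{n^2} = \frac{\theta(q)}{\phi(q)^2}\,\bigl(\phi(q)q^{|\alpha|}\bigr)\bigl(\phi(q)q^{|\beta|}\bigr)\Bigl(\tfrac{q^{n^2}}{\theta(q)}\Bigr),
\]
extending the sum from $\Lambda_N$ to the full product space $\Pfr^2\times\Z$ at the cost of inserting the indicator $\mathbf{1}_{\Lambda_N}(\alpha,\beta,n)$, and invoking the independence of the three coordinates under the product measure $\Ufr(q)^{\otimes 2}\otimes\Gfr_1(q)$, I would obtain exactly
\[
\E[f\circ\Phi_N(\lambda)] = \frac{\theta(q)}{Z_N(q)\phi(q)^2}\,\E\bigl[f(\alpha,\beta,n)\,q^{\frac{2}{N}F(\alpha,\beta,n)}\,\mathbf{1}_{\Lambda_N}(\alpha,\beta,n)\bigr].
\]

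There is no substantive obstacle: the genuine work has already been done in establishing the bijection $\Phi_N$ and the Casimir formula. The only points needing attention are to keep the prefactor $\theta(q)/(Z_N(q)\phi(q)^2)$ coherent throughout the normalisation, and to confirm that extending the sum to $\Pfr^2\times\Z$ by means of the indicator $\mathbf{1}_{\Lambda_N}$ does not alter the value — which is automatic since the integrand is supported on $\Lambda_N$ by construction.
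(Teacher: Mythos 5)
Your proposal is correct and follows exactly the route the paper takes (the paper's proof is the one-liner ``straightforward from the definitions of the measures and Proposition \ref{prop:cas_alphabeta}''): reindex the sum via the bijection $\Phi_N$, factor $q^{c_2}$ using the Casimir identity, and absorb the normalisations $\phi(q)^2$ and $\theta(q)$ into the prefactor. Your write-up simply makes the bookkeeping explicit, with no gap.
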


\begin{proof}
It is straightforward from the definitions of the measures and Proposition \ref{prop:cas_alphabeta}.
\end{proof}

\section{Proofs of the main theorems} \label{sec:largeN}

\subsection{The large $N$ limit of $\Gfr_N(q)$}

The behavior of $\Gfr_N(q)$ (or more precisely of the pushforward of $\Gfr_N(q)$ by the map $\Phi_N$) in the large $N$ limit is given by Theorem \ref{thm:cv_distrib}, that we will prove in the present subsection.

Before that, we need some preliminary control on the Casimir numbers. The main interest of this result is to give a minoration of $c_2$ which involves only the integers $n$ and $N$ and the sizes of the two partitions $\alpha$ and $\beta.$

\begin{lemma}[Domination lemma]\label{lem:domination2}
For any $(\alpha,\beta,n)\in\Lambda_N$, for $N \geq 1,$ 
\begin{equation}\label{eq:domination3}
c_2(\lambda_N(\alpha,\beta,n)) \geq \frac12(\vert\alpha\vert+\vert\beta\vert)+\left(n+\frac{\vert\alpha\vert-\vert\beta\vert}{N}\right)^2=:C_N(\alpha,\beta,n).
\end{equation}
Furthermore, for any $q\in(0,1)$,
\[
\sup_{N \geq 1} \sum_{(\alpha,\beta,n)\in\Lambda_N}q^{C_N(\alpha,\beta,n)} <\infty.
\]

\end{lemma}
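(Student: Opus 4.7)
The plan is to prove the pointwise bound \eqref{eq:domination3} by a direct algebraic manipulation of the expression for $c_2$ from Proposition \ref{prop:cas_alphabeta}, then derive the uniform summability from a routine shifted-theta estimate combined with the $q$-uniform generating function.

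First I would expand $C_N(\alpha,\beta,n)=\frac12(\vert\alpha\vert+\vert\beta\vert)+n^2+\frac{2n(\vert\alpha\vert-\vert\beta\vert)}{N}+\frac{(\vert\alpha\vert-\vert\beta\vert)^2}{N^2}$ and subtract from \eqref{eq:cas_alphabeta_TL}, so the $n^2$ and cross $2n(\vert\alpha\vert-\vert\beta\vert)/N$ terms cancel, leaving the equivalent inequality
\[
\frac12(\vert\alpha\vert+\vert\beta\vert)+\frac{2}{N}(K(\alpha)+K(\beta))\ \geq\ \frac{(\vert\alpha\vert-\vert\beta\vert)^2}{N^2}.
\]
Summing contents row by row yields $K(\alpha)=\tfrac12\sum_i\alpha_i^2+\tfrac12\vert\alpha\vert-\sum_i i\alpha_i$. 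The key bound is $\sum_i i\alpha_i\leq\frac{\ell(\alpha)+1}{2}\vert\alpha\vert$, which is Chebyshev's sum inequality applied to the increasing sequence $(i)$ and the decreasing sequence $(\alpha_i)$. This gives
\[
K(\alpha)\ \geq\ \frac12\|\alpha\|_2^2-\frac{\ell(\alpha)}{2}\vert\alpha\vert,
\]
where $\|\alpha\|_2^2:=\sum_i\alpha_i^2$, and analogously for $\beta$.

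Plugging this in and using $\ell(\alpha)\leq A_N\leq N/2$ and $\ell(\beta)\leq B_N\leq N/2$ (which follows directly from \eqref{eq:ANBN}), the ``linear'' contribution $\frac12\vert\alpha\vert-\frac{\ell(\alpha)\vert\alpha\vert}{N}$ is already nonnegative, and it only remains to show $\frac{\|\alpha\|_2^2+\|\beta\|_2^2}{N}\geq\frac{(\vert\alpha\vert-\vert\beta\vert)^2}{N^2}$. Cauchy--Schwarz gives $\|\alpha\|_2^2\geq\vert\alpha\vert^2/\ell(\alpha)\geq 2\vert\alpha\vert^2/N$ and similarly for $\beta$, so
\[
\|\alpha\|_2^2+\|\beta\|_2^2\ \geq\ \frac{2(\vert\alpha\vert^2+\vert\beta\vert^2)}{N}\ \geq\ \frac{(\vert\alpha\vert+\vert\beta\vert)^2}{N}\ \geq\ \frac{(\vert\alpha\vert-\vert\beta\vert)^2}{N},
\]
which closes the first part.

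For the uniform summability, I would fix $q\in(0,1)$ and, for any $c\in\R$, bound $\sum_{n\in\Z}q^{(n+c)^2}$ uniformly in $c$: choosing $n_0=\lfloor\tfrac12-c\rfloor$ gives $\vert n_0+c\vert\leq\tfrac12$, and for $n=n_0\pm k$ with $k\geq 1$ one has $(n+c)^2\geq(k-\tfrac12)^2$, hence $\sum_n q^{(n+c)^2}\leq 1+2\sum_{k\geq 1}q^{(k-1/2)^2}=:\tilde C_q<\infty$. Summing in $n$ first and using this bound with $c=(\vert\alpha\vert-\vert\beta\vert)/N$,
\[
\sum_{(\alpha,\beta,n)\in\Lambda_N}q^{C_N(\alpha,\beta,n)}\ \leq\ \tilde C_q\sum_{\alpha,\beta\in\Pfr}q^{\frac12(\vert\alpha\vert+\vert\beta\vert)}\ =\ \frac{\tilde C_q}{\phi(q^{1/2})^2},
\]
which is finite and independent of $N$.

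The only mildly delicate step is the Chebyshev bound on $\sum_i i\alpha_i$: without a nontrivial lower bound on $K(\alpha)$ one would end up with a negative $O(1)$ contribution from $\frac{2}{N}K(\alpha)$ that eats the $\frac12\vert\alpha\vert$ term; the Chebyshev inequality is exactly what makes the constraint $\ell(\alpha)\leq N/2$ enough to keep the linear part nonnegative and reserve the Cauchy--Schwarz strength for the $(\vert\alpha\vert-\vert\beta\vert)^2/N^2$ term.
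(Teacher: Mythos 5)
Your proof is correct and follows essentially the same route as the paper: complete the square in $n$, absorb the $(\vert\alpha\vert-\vert\beta\vert)^2/N^2$ error and the negative part of $K$ using the quadratic term $\sum_i\alpha_i^2$ together with the constraint $\ell\leq N/2$, then conclude summability via a uniform bound on the shifted theta sum and the factor $\phi(q^{1/2})^{-2}$. Your Chebyshev step $\sum_i i\alpha_i\leq\frac{\ell(\alpha)+1}{2}\vert\alpha\vert$ in fact supplies the justification for the inequality $\sum_i\alpha_i(1-2i)\geq-\vert\alpha\vert\ell(\alpha)$ that the paper uses without comment, so your write-up is if anything slightly more complete.
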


\begin{proof}
Let $(\alpha,\beta,n)\in\Lambda_N$. We have
\begin{align*}
c_2(\lambda_N(\alpha,\beta,n)) = & \vert\alpha\vert + \vert\beta\vert + n^2 +\frac{2}{N}(K(\alpha)+K(\beta) + n(\vert\alpha\vert-\vert\beta\vert))\\
= &  \vert\alpha\vert + \vert\beta\vert + \frac{2}{N}(K(\alpha)+K(\beta)) + \left(n+\frac{\vert\alpha\vert-\vert\beta\vert}{N}\right)^2 - \frac{1}{N^2}(\vert\alpha\vert-\vert\beta\vert)^2\\
\geq & \vert\alpha\vert + \vert\beta\vert + \frac{2}{N}(K(\alpha)+K(\beta)) + \left(n+\frac{\vert\alpha\vert-\vert\beta\vert}{N}\right)^2 - \frac{1}{N^2}(\vert\alpha\vert^2+\vert\beta\vert^2)
\end{align*}

Using the Cauchy--Schwarz inequality, we have 
\[
\frac{1}{N^2}\vert\alpha\vert^2 \leq \left(\sum_{k=1}^{\ell(\alpha)}\frac{1}{N^2}\right)\left(\sum_{k=1}^{\ell(\alpha)}\alpha_k^2\right)=\frac{\ell(\alpha)}{N^2}\sum_{k=1}^{\ell(\alpha)}\alpha_k^2\leq\frac{1}{2N}\sum_k\alpha_k^2,
\]
where, for the last inequality, we have used that $\ell(\alpha)\leq N/2.$
It yields
\begin{align*}
\frac{2}{N}K(\alpha) - \frac{1}{N^2}\vert\alpha\vert^2 \geq & \frac1N\left(\sum_{i=1}^{\ell(\alpha)} \alpha_i(\alpha_i+1-2i) - \frac12 \sum_{i=1}^{\ell(\alpha)} \alpha_i^2\right)\\
\geq & \frac1N \sum_{i=1}^{\ell(\alpha)} \alpha_i(1-2i)\\
\geq & - \frac{\vert\alpha\vert\ell(\alpha)}{N}.
\end{align*}
Using the same bound for $\beta$ leads to the following inequality:
\[
c_2(\lambda_N(\alpha,\beta,n) \geq \vert\alpha\vert\left(1-\frac{\ell(\alpha)}{N}\right)+\vert\beta\vert\left(1-\frac{\ell(\beta)}{N}\right)+\left(n+\frac{\vert\alpha\vert-\vert\beta\vert}{N}\right)^2.
\]
However, by assumption, we have $\ell(\alpha)\leq A_N,\ell(\beta)\leq B_N$, and for $N  \geq 1,$ we have
\[
\frac{\max(A_N,B_N)}{N} \leq \frac{1}{2}.
\]
We thus obtain \eqref{eq:domination3}. Now, let us bound
\[
\sum_{(\alpha,\beta,n)\in\Lambda_N}q^{C_N(\alpha,\beta,n)}.
\]
We have
\begin{align*}
\sum_{(\alpha,\beta,n)\in\Lambda_N}q^{C_N(\alpha,\beta,n)} & = \sum_{\substack{\ell(\alpha)\leq A_N\\ \ell(\beta)\leq B_N}} q^{\frac12(\vert\alpha\vert+\vert\beta\vert)}\left(\sum_{n\in\Z} q^{\left(n+\frac{\vert\alpha\vert-\vert\beta\vert}{N}\right)^2}\right).
\end{align*}
The sum between the brackets can be bounded independently from $N,\alpha,\beta$ by
\begin{equation}
\label{eq:quasiinvariance}
C:=\sup_{x\in[0,1]}\sum_{n\in\Z} q^{(n+x)^2} <\infty,
\end{equation}
and therefore 
\[ \sup_{N \ge 1  } \sum_{(\alpha,\beta,n)\in\Lambda_N}q^{C_N(\alpha,\beta,n)}  \leq \frac{C}{\phi(q^{1/2})^{2}}.\]
\end{proof}

\begin{remark}
The quasi-periodicity \eqref{eq:quasiinvariance} of the theta function is critical in the proof of the domination lemma. There might be an interpretation in terms of quasimodular forms.
\end{remark}

We are now in position to prove the asymptotic decorrelation. As a matter of fact, we will simultaneously prove Theorem \ref{thm:lim_pf} and \ref{thm:cv_distrib}.

\begin{proof}[Proof of Theorem \ref{thm:cv_distrib} and Theorem \ref{thm:lim_pf}]
Let $f:\Pfr^2\times\Z\to\R$ be measurable and bounded. We have to study the convergence of the following sum:
\[
\sum_{(\alpha, \beta,n) \in \Pfr^2\times\Z} G_N(\alpha, \beta,n),
\]
with
\[
 G_N(\alpha, \beta,n):= f(\alpha, \beta,n) q^{c_2(\lambda_N(\alpha, \beta,n))} \mathbf 1_{\Lambda_N}(\alpha, \beta,n).
\]
It is clear that, for any $(\alpha, \beta,n) \in \Pfr^2\times\Z,$ we have the pointwise convergence
\[
G_N(\alpha, \beta,n) \xrightarrow[N \rightarrow \infty]{} f(\alpha, \beta,n) q^{|\alpha|}q^{|\beta|}q^{n^2}:= g(\alpha, \beta,n).
\]

As we are working with measures of infinite mass, the dominated convergence criterion that we will apply is not standard. For the sake of completeness, we state and prove it in the Appendix.
The proof therefore boils down to show that the family of functions $(G_N-g)$ satisfies \eqref{eq:Bogachev}, thanks to Lemma \ref{lem:cv_L1}. We present in details the estimates for $(G_N).$ The corresponding estimates for $g$ are straightforward.

For any positive integer $M$, set
\[
X_{M} = \{(\alpha,\beta,n)\in\Pfr^2\times\Z:\ \vert\alpha\vert\leq M, \vert\beta\vert\leq M, \vert n\vert \leq M\}.
\]
It is always a finite set so it has finite counting measure. We first have :
\[
\sum_{(\alpha, \beta,n) \in X_{M}^c} G_N(\alpha, \beta,n) \leq \|f\|_{\infty}
\sum_{(\alpha, \beta,n)\in X_{M}^c\cap\Lambda_N} q^{c_2(\lambda_N(\alpha, \beta,n))}, 
\]
we now use Lemma \ref{lem:domination2} to get 
\[
\sum_{(\alpha, \beta,n) \in X_{M}^c} G_N(\alpha, \beta,n) \leq \|f\|_{\infty}
\sum_{(\alpha, \beta,n) \in X_{M}^c\cap\Lambda_N} q^{C_N(\alpha, \beta,n)}. 
\]
Let us show that the sum in the right-hand side can be arbitrarily small for $M$ large enough. By the same arguments as in the proof of Lemma \ref{lem:domination2},
\begin{align*}
\sum_{(\alpha,\beta,n)\in\Lambda_N}&\mathbf{1}_{X_M}(\alpha,\beta,n) q^{C_N(\alpha,\beta,n)}\\
= &\sum_{\substack{\ell(\alpha)\leq A_N\\ \ell(\beta)\leq B_N}} \mathbf{1}_{\{\vert\alpha\vert > M,\vert\beta\vert > M\}} q^{\frac12(\vert\alpha\vert+\vert\beta\vert)}\left(\sum_{n\in\Z}\mathbf{1}_{\{\vert n\vert > M\}}q^{\left(n+\frac{\vert\alpha\vert-\vert\beta\vert}{N}\right)^2}\right).
\end{align*}
As all terms are nonnegative, it is clear that the right-hand side is bounded by
\begin{align*}
\sum_{\vert\alpha\vert,\vert\beta\vert  > M} q^{\frac12(\vert\alpha\vert+\vert\beta\vert)}\left(\sum_{n\in\Z}q^{\left(n+\frac{\vert\alpha\vert-\vert\beta\vert}{N}\right)^2}\right) &\leq C\sum_{\vert\alpha\vert,\vert\beta\vert > M} q^{\frac12(\vert\alpha\vert+\vert\beta\vert)} \\
& \leq C \phi(q^{\frac{1}{2}})^2 \Pbb(|\alpha| >M)^2,
\end{align*}
where $\alpha \sim \Ufr(q)$ and $C$ is defined in \eqref{eq:quasiinvariance}. Using Lemma \ref{lem:remainder_pn2}, we get that, for $M$ large enough, it is smaller than
\[
\frac{\varepsilon}{C\|f\|_\infty},
\]
and it follows that
\[
\sup_{N\geq 1}\sum_{(\alpha, \beta,n) \in X_{M}^c} G_N(\alpha, \beta,n) \leq \varepsilon,
\]
as required.

Note that we have actually proved the convergence for the non-normalized measure rather than $\Gfr_N(q)$; if we apply the result to $f(\alpha,\beta,n)=1$ we obtain the convergence of the partition function, which is Theorem \ref{thm:lim_pf}, and the proof of Theorem \ref{thm:cv_distrib} follows directly.
\end{proof}

\subsection{Asymptotic expansion of the partition function}

Our goal is now to prove Theorem~\ref{thm:main}. Before going to its proof, we will again need some preliminary lemmas. For any $k,\ell\geq 0$ and any $q\in(0,1)$, we set
\[
A_{k,\ell} = \E\left[K(\alpha)^k\vert\alpha\vert^\ell\right] \text{ and } B_{k,\ell} = \E\left[n^{k+\ell}\right],
\]
for $\alpha\sim\Ufr(q)$ and $n\sim\Gfr_1(q)$.

\begin{lemma}\label{lem:finite_pow}
Let $q\in(0,1)$, and $\alpha,\beta\sim\Ufr(q)$ and $n\sim\Gfr_1(q)$ be independent random variables. 
For any $k\geq 0,$ we have
\begin{equation}\label{eq:moments_even}
\E\left[F(\alpha,\beta,n)^{2k}\right] = \sum\binom{2k}{2k_1\ 2k_2\ k_3\ k_4}(-1)^{k_4}A_{2k_1,k_3}A_{2k_2,k_4}B_{k_3,k_4},
\end{equation}
where the sum is over $k_1,k_2,k_3,k_4\geq 0$ such that $2k_1+2k_2+k_3+k_4 = 2k$,
and
\begin{equation}\label{eq:moments_odd}
\E\left[F(\alpha,\beta,n)^{2k+1}\right]=0.
\end{equation}
\end{lemma}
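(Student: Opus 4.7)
The plan is to expand $F(\alpha,\beta,n)^m$ by the multinomial theorem, factor the expectation through the independence of $\alpha,\beta,n$, and then identify the surviving terms via parity arguments resting on two natural symmetries: partition conjugation under $\Ufr(q)$, and sign reversal under $\Gfr_1(q)$.

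First I would verify that all moments in play are finite, which is the main preliminary needed to justify the multinomial expansion of the expectation (and essentially the only technical obstacle). Under $\Gfr_1(q)$, $\Pbb(n=j)\propto q^{j^2}$ has super-exponential tails, so every $B_{k,\ell}=\E[n^{k+\ell}]$ is finite. Under $\Ufr(q)$, the classical bound $p(n)\leq e^{c\sqrt n}$ already used in Lemma \ref{lem:remainder_pn2} yields $\E[|\alpha|^m]<\infty$ for every $m$; combined with the elementary inequality $|K(\alpha)|\leq\sum_{(i,j)\in\alpha}(i+j)\leq 2|\alpha|^2$, this gives $A_{k,\ell}<\infty$ for every $k,\ell$.

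Writing $F=K(\alpha)+K(\beta)+n\cdot|\alpha|+(-n)\cdot|\beta|$ and expanding by the multinomial theorem, independence of $\alpha,\beta,n$ factors the expectation as
\[
\E[F^m]=\sum_{a_1+a_2+a_3+a_4=m}\binom{m}{a_1,a_2,a_3,a_4}(-1)^{a_4}\,A_{a_1,a_3}\,A_{a_2,a_4}\,B_{a_3,a_4}.
\]
Two symmetries then eliminate most terms. Partition conjugation $\alpha\mapsto\alpha'$ preserves $\Ufr(q)$ (since $|\alpha'|=|\alpha|$) while sending $K(\alpha)$ to $-K(\alpha)$ and leaving $|\alpha|$ fixed, so $A_{a,b}=0$ whenever $a$ is odd; the same argument applies to the $\beta$ factor. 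The symmetry $n\mapsto-n$ of $\Gfr_1(q)$ kills $B_{a_3,a_4}$ when $a_3+a_4$ is odd. Hence only terms with $a_1,a_2$ both even and $a_3+a_4$ even survive. For $m=2k$, writing $a_1=2k_1$, $a_2=2k_2$, $a_3=k_3$, $a_4=k_4$ (with $k_3+k_4$ automatically even since $2k_1+2k_2+k_3+k_4=2k$) reproduces exactly \eqref{eq:moments_even}. For $m=2k+1$, the surviving terms would require $2k_1+2k_2+(k_3+k_4)$ to be even, contradicting $m=2k+1$; hence every term vanishes, proving \eqref{eq:moments_odd}.
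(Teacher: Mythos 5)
Your proof is correct and follows essentially the same route as the paper: multinomial expansion of $F^m$, factorization through independence, and elimination of terms by parity, with the conjugation symmetry $\alpha\mapsto\alpha'$ killing the odd powers of $K$. The only (minor) difference is that you dispose of the terms with $k_3+k_4$ odd via the sign symmetry $n\mapsto -n$ of $\Gfr_1(q)$, so each such term vanishes individually, whereas the paper obtains the same conclusion by pairwise cancellation under the swap $(k_1,k_3)\leftrightarrow(k_2,k_4)$; your added check that all the moments $A_{k,\ell}$ and $B_{k,\ell}$ are finite is a welcome precaution that the paper leaves implicit.
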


\begin{proof}
The proof will be organized as follows. First, we expand $F(\alpha,\beta,n)^k$ for any $k$ using the multinomial formula, then we prove that the expectation for odd $k$ is zero, which will simplify the general formula.

\noindent{\it Step 1: expansion.} From the definition of $F(\alpha,\beta,n)$ and the multinomial formula, we have
\[
\E[F(\alpha,\beta,n)^k] = \sum \binom{k}{k_1\ k_2\ k_3\ k_4} \E\left[ K(\alpha)^{k_1}K(\beta)^{k_2} n^{k_3+k_4}\vert\alpha\vert^{k_3}\vert\beta\vert^{k_4}(-1)^{k_4} \right].
\]
where the sum is over $k_1,k_2,k_3,k_4\geq 0$ such that $k_1+k_2+k_3+k_4=k$. By independence and linearity, it rewrites
\begin{equation}\label{eq:first_exp_fk}
\E[F(\alpha,\beta,n)^k] = \sum_{\substack{k_1,k_2,k_3,k_4\geq 0\\ k_1+k_2+k_3+k_4 = k}} \binom{k}{k_1\ k_2\ k_3\ k_4} (-1)^{k_4} A_{k_1,k_3} A_{k_2,k_4} B_{k_3,k_4}.
\end{equation}

\noindent{\it Step 2: cancellations.} If $\lambda$ is a partition, we denote by $\lambda'$ its conjugate, in the sense that $\lambda'_i = \# \{j, \lambda_j \ge i \}.$ As $\lambda \mapsto \lambda'$ is a bijection on $\Pfr$ and $\vert \lambda\vert = \vert \lambda'\vert,$ we have that, if $\alpha \sim \Ufr(q)$ then $\alpha' \sim \Ufr(q).$ Moreover, 
$K(\lambda') = - K(\lambda)$ so that 
\[ \E\left[K(\alpha)^{k_1}\vert\alpha\vert^{k_3} \mathbf{1}_{\{K(\alpha)>0\}}\right] =(-1)^{k_1}\E\left[K(\alpha)^{k_1}\vert\alpha\vert^{k_3}\mathbf{1}_{\{K(\alpha)<0\}}\right] \]
and 
\[
\E\left[K(\alpha)^{k_1}\vert\alpha\vert^{k_3}\right] = (1+(-1)^{k_1})\E\left[K(\alpha)^{k_1}\vert\alpha\vert^{k_3}\mathbf{1}_{\{K(\alpha)>0\}}\right],
\]
which cancels when $k_1$ is odd.
By a similar argument for $k_2,$ we see that we can restrict the sum in \eqref{eq:first_exp_fk} to
\[
\{k_1,k_2,k_3,k_4\geq 0\vert\ k_1+k_2+k_3+k_4=k, k_1 \text{ and } k_2 \text{ are even}\}.
\]
Now, observe that for any $k_1,k_2,k_3,k_4$, the terms
\[
(-1)^{k_4}A_{k_1,k_3}A_{k_2,k_4}B_{k_3,k_4}
\]
and
\[
(-1)^{k_3}A_{k_2,k_4}A_{k_,k_3}B_{k_4,k_3}
\]
cancel each other if $k_3$ and $k_4$ do not have the same parity. Therefore $k$ has to be even and we obtain both \eqref{eq:moments_even} and \eqref{eq:moments_odd}.
\end{proof}

\begin{lemma}\label{lem:concentration_pow}
Let $q\in(0,1)$, and $\alpha,\beta\sim\Ufr(q)$ and $n\sim\Gfr_1(q)$ be independent random variables. 
For any $k\geq 1,$ there exists a constant $C_{k,q}$ such that 
\[
\E\left[F(\alpha,\beta,n)^k\mathbf{1}_{{\Lambda}_N^c}(\alpha,\beta,n)\right]\leq C_{k,q} e^{\frac{N}{2}\log q}.
\]
Consequently, for any $p\in\N$ and $k\geq 1,$ there is a constant $C(k,p,q)>0$ such that, for any $N\ge 1,$ 
\begin{equation}\label{eq:concentration_pow}
\E\left[F(\alpha,\beta,n)^k\mathbf{1}_{{\Lambda}_N^c}(\alpha,\beta,n)\right]\leq C(k,p,q) N^{-p}.
\end{equation}
\end{lemma}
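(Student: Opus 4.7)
My plan is to use Cauchy--Schwarz to decouple the polynomial quantity $F(\alpha,\beta,n)^k$ from the indicator, then handle the two factors separately using existing results. First I would bound $\mathbf{1}_{\Lambda_N^c} \le \mathbf{1}_{\{\ell(\alpha)>A_N\}} + \mathbf{1}_{\{\ell(\beta)>B_N\}}$, and apply Cauchy--Schwarz to each of the resulting summands to obtain
\[
\E\bigl[|F|^k\,\mathbf{1}_{\{\ell(\alpha)>A_N\}}\bigr] \le \E\bigl[F^{2k}\bigr]^{1/2}\,\Pbb\bigl(\ell(\alpha)>A_N\bigr)^{1/2},
\]
and symmetrically for $\beta$. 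The two ingredients I then need are: (i) $\E[F^{2k}]$ is a finite constant depending only on $k$ and $q$, uniformly in $N$; (ii) the two tail probabilities decay exponentially in $N$.

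Ingredient (i) follows directly from Lemma \ref{lem:finite_pow}, which expresses $\E[F^{2k}]$ as a finite sum of products of the numbers $A_{i,j}=\E[K(\alpha)^i|\alpha|^j]$ and $B_{i,j}=\E[n^{i+j}]$. Each $A_{i,j}$ is finite because $|\alpha|$ has moments of every order under $\Ufr(q)$---a consequence of the Hardy--Ramanujan-type estimate $p(n)\le e^{c\sqrt n}$ already exploited in Lemma \ref{lem:remainder_pn2}---and because $|K(\alpha)|\le|\alpha|^2$; each $B_{i,j}$ is finite by Proposition \ref{prop:is_it_wick}. Ingredient (ii) is a direct application of Proposition \ref{prop:tail_pt}: since $A_N=\lfloor(N-1)/2\rfloor$ and $B_N\ge\lfloor(N-1)/2\rfloor$, both $\Pbb(\ell(\alpha)>A_N)$ and $\Pbb(\ell(\beta)>B_N)$ are bounded by $C_q\,e^{(N/2)\log q}$.

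Putting the two pieces together through Cauchy--Schwarz yields an exponential bound of the form $\widetilde C_{k,q}\,e^{(N/4)\log q}$; if one insists on the cleaner rate $e^{(N/2)\log q}$ as stated, one can instead avoid Cauchy--Schwarz and expand $F^k$ by the multinomial formula, bounding each monomial $K(\alpha)^aK(\beta)^bn^c|\alpha|^d|\beta|^e$ pointwise via $|K(\alpha)|\le|\alpha|^2$ and applying Lemma \ref{lem:remainder_pn} directly to the sum over $\alpha$ with $\ell(\alpha)>A_N$. This produces a bound of the form $C_{k,q}\,N^{r(k)}\,e^{(N/2)\log q}$, and the polynomial prefactor is absorbed into an exponential at arbitrarily small cost in the rate. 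The polynomial bound \eqref{eq:concentration_pow} is then immediate from either estimate, since $\log q<0$ implies that $N^p\,e^{\varepsilon N\log q}$ is bounded uniformly in $N$ for every $p$ and every $\varepsilon>0$. The only mild obstacle is the discrepancy between the clean Cauchy--Schwarz rate and the displayed rate, which is cosmetic: the corollary, which is what the sequel actually uses, only needs some exponential decay.
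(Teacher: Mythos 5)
Your proposal is correct and follows essentially the same route as the paper: the authors likewise bound $\mathbf{1}_{\Lambda_N^c}$ by symmetry in $\alpha,\beta$, apply Cauchy--Schwarz to separate $F^{2k}$ (finite by Lemma \ref{lem:finite_pow}) from the tail probability $\Pbb(\ell(\alpha)>\lfloor(N-1)/2\rfloor)$ (controlled by Proposition \ref{prop:tail_pt}), and conclude the polynomial bound. Your observation that Cauchy--Schwarz only yields the rate $e^{(N/4)\log q}$ rather than the displayed $e^{(N/2)\log q}$ applies equally to the paper's own argument, which in fact only asserts the $O(N^{-p})$ consequence; your remark that this is immaterial for the sequel is accurate.
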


\begin{proof}
As $\alpha$ and $\beta$ are playing a similar role, we have
\[
\E\left[F(\alpha,\beta,n)^k\mathbf{1}_{\Lambda_N^c}(\alpha,\beta,n)\right]\leq 2  \E\left[F(\alpha,\beta,n)^k\mathbf{1}_{\{\ell(\alpha)>\lfloor(N-1)/2\rfloor\}}\right].
\]

By Cauchy-Schwarz inequality, we have 
\[ 
 \E\left[F(\alpha,\beta,n)^k\mathbf{1}_{\{\ell(\alpha)>\lfloor(N-1)/2\rfloor\}}\right]^2\le \Pbb(\ell(\alpha)>\lfloor (N-1)/2\rfloor)
 \E\left[F(\alpha,\beta,n)^{2k} \right],
\]
which is $O(N^{-p})$ with a bound that only depends on $k,p$ and $q$, according to Proposition \ref{prop:tail_pt} and Lemma \ref{lem:finite_pow}.
\end{proof}

We are now in position to prove Theorem \ref{thm:main}, whose precise statement is the following:
\begin{theorem}\label{thm:main3}
Let $t>0$ be a positive real number, and set $q_t=e^{-\frac{t}{2}}$. For any $p\geq 0$ and any integer $N$ large enough, the partition function $Z_N(q_t)$ admits the asymptotic expansion
\begin{equation}\label{eq:zn_expansion}
Z_N(q_t) = \sum_{k=0}^p\frac{a_{2k}(t)}{N^{2k}} + O(N^{-2p-2}),
\end{equation}
where $a_{2k}(t)$ is given by
\[
a_{2k}(t) = \frac{t^{2k}\theta(q_t)}{(2k)!\phi(q_t)^2}\E[F(\alpha,\beta,n)^{2k}],
\]
with $\alpha, \beta$ and $n$  independent, $\alpha, \beta \sim \Ufr(q_t)$ and $n \sim \Gfr_1(q_t).$
\end{theorem}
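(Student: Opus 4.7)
The plan is to start from Proposition~\ref{prop:transfert} applied to $f\equiv 1$, which yields the identity
\[
Z_N(q_t)=\frac{\theta(q_t)}{\phi(q_t)^2}\,\E\!\left[e^{-\frac{t}{N}F(\alpha,\beta,n)}\,\mathbf{1}_{\Lambda_N}(\alpha,\beta,n)\right],
\]
where $\alpha,\beta,n$ are independent with $\alpha,\beta\sim\Ufr(q_t)$ and $n\sim\Gfr_1(q_t)$. Taylor expanding $e^{-tF/N}$ to order $2p+1$, using the vanishing of odd moments of $F$ from Lemma~\ref{lem:finite_pow} and Lemma~\ref{lem:concentration_pow} to replace $\mathbf{1}_{\Lambda_N}$ by $1$ in each polynomial term, formally reproduces the expansion with the announced coefficients $a_{2k}(t)$. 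The entire difficulty lies in proving that the Taylor remainder, after integration against $\mathbf{1}_{\Lambda_N}$ times the product measure, is genuinely $O(N^{-2p-2})$.

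The main obstacle is that the standard estimate $|e^x-\sum_{k<K}x^k/k!|\le|x|^Ke^{|x|}/K!$ applied at $x=-tF/N$ produces a factor $e^{t|F|/N}$ which is not integrable against $\Ufr(q_t)^{\otimes 2}\otimes\Gfr_1(q_t)$ restricted to $\Lambda_N$: since $|\alpha|$ and $|\beta|$ are unconstrained inside $\Lambda_N$ and $|K(\alpha)|\le|\alpha|^2$, the ratio $|F|/N$ can be as large as $|\alpha|^2/N$, and $e^{t|\alpha|^2/N}q_t^{|\alpha|}$ blows up for $|\alpha|\gtrsim N$. I would sidestep this by introducing a mesoscopic threshold $T_N=(\log N)^2$ and splitting $\Lambda_N$ according to the event
\[
E_N=\{(\alpha,\beta,n):|\alpha|\le T_N,\ |\beta|\le T_N,\ |n|\le T_N\}.
\]
On $\Lambda_N\cap E_N^c$, the domination $c_2(\lambda)\ge C_N(\alpha,\beta,n)$ from Lemma~\ref{lem:domination2}, combined with Lemma~\ref{lem:remainder_pn2} applied to $q_t^{1/2}$ for $|\alpha|,|\beta|$ and with the super-Gaussian tail of $\Gfr_1(q_t)$ for $|n|$, shows that $\sum_{\Lambda_N\cap E_N^c}q_t^{C_N}$ decays faster than any polynomial in $1/N$, so that $\E[q_t^{2F/N}\mathbf{1}_{\Lambda_N\cap E_N^c}]$ is negligible at every order. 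On $\Lambda_N\cap E_N$, the elementary bound $|F|\le|\alpha|^2+|\beta|^2+|n|(|\alpha|+|\beta|)\le 4T_N^2$ forces $t|F|/N=O((\log N)^4/N)\to 0$; hence for $N$ large enough the Taylor remainder is dominated pointwise by $C_p(|F|/N)^{2p+2}$, and integration together with the finiteness of $\E[F^{2p+2}]$ from Lemma~\ref{lem:finite_pow} yields the desired $O(N^{-2p-2})$ bound.

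It then remains to identify the polynomial part $\sum_{k=0}^{2p+1}\frac{(-t)^k}{k!N^k}\E[F^k\mathbf{1}_{\Lambda_N\cap E_N}]$ with the claimed coefficients. For each $k$, the decomposition
\[
\E[F^k\mathbf{1}_{\Lambda_N\cap E_N}]=\E[F^k]-\E[F^k\mathbf{1}_{\Lambda_N^c}]-\E[F^k\mathbf{1}_{\Lambda_N\cap E_N^c}]
\]
has a first error controlled by Lemma~\ref{lem:concentration_pow} and a second one bounded by Cauchy--Schwarz through $\sqrt{\E[F^{2k}]\,\Pbb(E_N^c)}$, both super-polynomially small in $N$ by the tail estimates above. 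Substituting into the Taylor sum and invoking the cancellation $\E[F^{2k+1}]=0$ from Lemma~\ref{lem:finite_pow} collapses it to $\sum_{k=0}^p\frac{t^{2k}}{(2k)!N^{2k}}\E[F^{2k}]+O(N^{-2p-2})$, and multiplication by $\theta(q_t)/\phi(q_t)^2$ produces the stated expansion with $a_{2k}(t)=\frac{t^{2k}\theta(q_t)}{(2k)!\phi(q_t)^2}\E[F^{2k}]$.
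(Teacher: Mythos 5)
Your proposal is correct and reaches the stated expansion, but it controls the Taylor remainder by a genuinely different mechanism than the paper. The paper writes the remainder in integral form $\frac{F^{2p+2}}{(2p+1)!N^{2p+2}}\int_0^t(t-s)^{2p+1}e^{-\frac{s}{N}F}\,ds$, bounds $e^{-\frac{s}{N}F}\le 1+e^{-\frac{t}{N}F}$ (exploiting the sign of $F$ rather than its absolute value), and then applies H\"older with exponents $3$ and $3/2$ together with a rerun of the domination argument of Lemma~\ref{lem:domination2} (with $\frac{3}{N}F$ in place of $\frac{2}{N}F$) to show that $\E[F^{2p+2}I_{N,t}\mathbf{1}_{\Lambda_N}]$ is bounded uniformly in $N$. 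You instead introduce the mesoscopic truncation $E_N=\{|\alpha|,|\beta|,|n|\le(\log N)^2\}$: on $E_N$ the argument $t|F|/N=O((\log N)^4/N)$ is uniformly small, so the crude Lagrange bound suffices, while the contribution of $\Lambda_N\cap E_N^c$ --- for the exponential via the domination lemma, and for the polynomial terms via Cauchy--Schwarz combined with the tails from Proposition~\ref{prop:tail_pt} and Lemma~\ref{lem:remainder_pn2} --- is super-polynomially small. Your diagnosis of why the naive bound $|e^x-\sum_{k\le 2p+1}x^k/k!|\le |x|^{2p+2}e^{|x|}/(2p+2)!$ fails without truncation is accurate: on $\Lambda_N$ a single-row partition has $K(\alpha)=\binom{|\alpha|}{2}$, so $e^{t|F|/N}q_t^{|\alpha|}$ is not summable; the paper's sign-sensitive bound and your truncation each circumvent exactly this. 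The remaining steps (replacing $\mathbf{1}_{\Lambda_N}$ by $1$ via Lemma~\ref{lem:concentration_pow} and killing odd moments via Lemma~\ref{lem:finite_pow}) coincide with the paper's. Your route costs one extra layer of splitting but avoids the H\"older step; both arguments are complete.
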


\begin{remark}
This asymptotic expansion is also a topological expansion in the sense of Section \ref{subsec:topo}. We do not address this aspect in the following proof but we will pursue in this direction in Section \ref{sec:expansions}. Indeed, the proofs that there is a convergent asymptotic expansion and that the coefficients of the expansion have a topological meaning are completely independent.
\end{remark}

\begin{proof}
In the whole proof, $\alpha, \beta$ and $n$ are independent, with $\alpha, \beta \sim \Ufr(q_t)$ and $n \sim \Gfr_1(q_t).$
Let us start by writing the partition function with the help of Proposition \ref{prop:transfert}:
\[
Z_N(q_t) = \frac{\theta(q_t)}{\phi(q_t)^2}\E\left[e^{-\frac{t}{N}F(\alpha,\beta,n)}\mathbf{1}_{\Lambda_N}(\alpha,\beta,n)\right].
\]

We perform a Taylor expansion of the exponential up to order $2p+1$:
\[
e^{-\frac{t}{N}F(\alpha,\beta,n)}= \sum_{k=0}^{2p+1}\frac{(-t)^k}{k!N^k}F(\alpha,\beta,n)^k + \frac{F(\alpha,\beta,n)^{2p+2}}{(2p+1)!N^{2p+2}}I_{N,t}(\alpha,\beta,n),
\]
where
\[
I_{N,t}(\alpha,\beta,n) = \int_0^t (t-s)^{2p+1}e^{-\frac{s}{N}F(\alpha,\beta,n)}ds.
\]
By linearity, it follows that
\begin{align*}
Z_N(q_t) = & \frac{\theta(q_t)}{\phi(q_t)^2}\left(\sum_{k=0}^{2p+1} \frac{(-t)^k}{k!N^k}\E[F(\alpha,\beta,n)^k] - \sum_{k=0}^{2p+1}\frac{(-t)^k}{k!N^k}\E[F(\alpha,\beta,n)^k\mathbf{1}_{\Lambda_N^c}(\alpha,\beta,n)]\right)\\
& + \frac{\theta(q_t)}{\phi(q_t)^2}\frac{1}{(2p+1)!N^{2p+2}}\E[F(\alpha,\beta,n)^{2p+2}I_{N,t}(\alpha,\beta,n)\mathbf{1}_{\Lambda_N}(\alpha,\beta,n)].
\end{align*}
As a consequence of Lemma \ref{lem:concentration_pow}, we have
\begin{equation}\label{eq:remainder1}
\sum_{k=0}^{2p+1}\frac{(-t)^k}{k!N^k}\E[F(\alpha,\beta,n)^k\mathbf{1}_{\Lambda_N^c}(\alpha,\beta,n)] = O(N^{-2p-2}).
\end{equation}
Now we conclude the control of the remainder by claiming that
\begin{equation}\label{eq:remainder2}
\sup_{n}\E[F(\alpha,\beta,n)^{2p+2}I_{N,t}(\alpha,\beta,n)\mathbf{1}_{\Lambda_N}(\alpha,\beta,n)] <\infty.
\end{equation}
Indeed, for any $s \le t,$
\begin{align*}
\E[F(\alpha,\beta,n)^{2p+2}& e^{-\frac{s}{N}F(\alpha,\beta,n)}\mathbf{1}_{\Lambda_N}(\alpha,\beta,n)] \\ 
&\leq \E[F(\alpha,\beta,n)^{2p+2} (1+ e^{-\frac{t}{N}F(\alpha,\beta,n)})\mathbf{1}_{\Lambda_N}(\alpha,\beta,n)] \\
 \leq \E[F(\alpha,\beta,n)^{2p+2}] &+  \E[F(\alpha,\beta,n)^{6p+6}]^{1/3} \E[e^{-\frac{3t}{2N}F(\alpha,\beta,n)}\mathbf{1}_{\Lambda_N}(\alpha,\beta,n)]^{2/3},
\end{align*}
where we have used H\"older inequality with conjugate coefficients $3$ and $3/2.$ Now, by the same argument as in the proof of Lemma \ref{lem:domination2}, the last term can be bounded as follows:
\[
\sum_{(\alpha, \beta,n) \in \Lambda_N} q_t^{\alpha + \beta +n^2 + \frac{3}{N} F(\alpha,\beta,n)} 
\leq \sum_{\substack{\ell(\alpha)\leq A_N\\ \ell(\beta)\leq B_N}} q_t^{\frac14(\vert\alpha\vert+\vert\beta\vert)}\left(\sum_{n\in\Z} q_t^{\left(n+\frac{3\vert\alpha\vert-\vert\beta\vert}{2N}\right)^2}\right),
\]
which is finite as explained in the proof of Lemma \ref{lem:domination2}. We therefore get that 
\[ Z_N(q_t) = \frac{\theta(q_t)}{\phi(q_t)^2} \sum_{k=0}^{2p+1} \frac{(-t)^k}{k!N^k}\E[F(\alpha,\beta,n)^k]  + O(N^{-2p-2})
\]
and we conclude by using \eqref{eq:moments_odd}.
\end{proof}

\section{Random partitions and enumeration of ramified coverings}
\label{sec:enumeration}

In this section, we  deal with the enumeration of ramified coverings of compact Riemann surfaces, with a point of view inspired by Okounkov's works \cite{Oko2,BO,EO}. It is based on the idea that one can turn generating series of the generalized Hurwitz numbers into probability measures on partitions, making an effective link between models of random partitions and enumerative geometry. We start by recalling the relationship between partitions and enumeration of ramified coverings, then we introduce and study new measures on partitions, that we call respectively the \emph{Frobenius} and the \emph{Plancherel--Hurwitz} measures. The main point is that these measure are related to the $q$-uniform measure and can benefit from the results we obtained in Section \ref{sec:phi_theta}. We finish this section by a topological interpretation of the expectations involved in the coefficients of the asymptotic expansion \eqref{eq:zn_expansion}.

\subsection{Enumeration of ramified coverings}

If $X,Y$ are two topological spaces, a continuous surjective map $p:Y\to X$ is a \emph{covering} of $X$ if for any $x\in X$, there is an open neighbourhood $U$ of $x$ such that preimage $p^{-1}(U)$ is the disjoint union of open sets $V_1,\ldots,V_n$, and $p\vert_{V_i}:V_i\to U$ is a homeomorphism for all $1\leq i\leq n$. If $X$ is a smooth path-connected surface and $R\subset X$ is a finite subset of $X$, a \emph{ramified covering} of $X$ with \emph{ramification locus} $R$ is the datum of a topological space $Y$ and a continuous mapping $p:Y\to X$ such that:
\begin{enumerate}
\item the mapping $p\vert_{p^{-1}(X\setminus R)}:p^{-1}(X\setminus R)\to X\setminus R$ is a covering;
\item for all $x\in X$ and $y\in p^{-1}(\{x\})$, there exists an open neighbourhood $V$ of $y$ and an integer $n\geq 1$ such that the mapping $p\vert_V$ is (topologically) conjugated to the mapping
\[
\left\lbrace\begin{array}{ccc}
\C & \longrightarrow & \C\\
z & \longmapsto & z^k
\end{array}\right..
\]
\end{enumerate}
In this case, $k$ is called the \emph{ramification index}, or the \emph{ramification order}, of $X$ at $x$. Equivalently, the preimage of each branch point by $p$ is given by $n-(k-1)$ points. We also denote by $B$, and call \emph{branch locus} of $X$, the set $p(R)$. Its elements are called \emph{branch points}. In what follows, we will only consider \emph{generic} ramifications, i.e. the ramification index of $X$ at each branch point will be 2. 

A powerful result is that if the base space is a Riemann surface, then the ramified covering can be equipped with a complex structure such that $p$ is holomorphic\footnote{This fact was one of the initial motivations from Eskin and Okounkov, because it enables to link the enumeration of ramified coverings to the computation of Masur--Veech volumes, see \cite{Agg} and references therein for more recent developments.}. In particular it is orientable and it has a well-defined genus. The set of $n$-sheeted coverings of a closed orientable surface $X$ with ramification locus $R$ is in bijection with the representation space of the fundamental group $\pi_1(X\setminus R)$ into the symmetric group $\mathfrak{S}_n$ \cite{EO}:
\begin{equation}\label{eq:equiv_cov}
\{n\text{-sheeted ramified coverings of } X\setminus R\} \simeq \Hom(\pi_1(X\setminus R),\mathfrak{S}_n)/\mathfrak{S}_n.
\end{equation}

The standard way to count ramified coverings is therefore by counting weighted isomorphism classes of the right-hand side of \eqref{eq:equiv_cov}, the weight being given by $\frac{1}{\vert\mathfrak{S}_n\vert} = \frac{1}{n!}$. Assuming that we only consider generic ramifications and that $X$ has genus $g$, the corresponding number of ramified coverings is given by the \emph{Hurwitz numbers} with base $X$:
\begin{equation}\label{eq:HurwitzNumbers}
H_g(n,k) = \sum_{\alpha\vdash n}\left(\frac{n!}{\chi_\alpha(1)}\right)^{2g-2}K(\alpha)^k.
\end{equation}
A detailed derivation of these numbers can be found in \cite[Appendix A]{LZ}. By the Riemann--Hurwitz formula, if $Y$ is a $n$-sheeted ramified covering over a surface $X$ of genus $g$ with ramification locus $R=\{x_1,\ldots,x_k\}$, then the genus $g'$ of $Y$ is given by
\begin{equation}\label{eq:RiemannHurwitz}
g' = \frac{k}{2} + 1 + n(g-1).
\end{equation}
Here, we are only interested in the particular case of $g=1$ (when the base is a torus). We obtain a nice simple relationship between the number of ramification points and the genus of the ramified covering: $k=2g'-2$. Besides, in order to ensure that the ramified covering corresponds to an orientable surface, we see that $k$ has to be even. In order to stress this fact, we will enumerate the Hurwitz numbers as $H_g(n,2g'-2)$ with $g'\geq 1$ in the sequel. Their generating function is the formal power series
\begin{equation}\label{eq:Gen_Fun_Hurwitz}
\F_{g,g'}(q) = \sum_{n\geq 1} H_g(n,2g'-2)q^n.
\end{equation}
Let us make a few comments on how to compute these generating functions when the base space is a torus.
\begin{itemize}
\item The case $g'=1$ is trivial:
\[
\F_{1,1}(q)=\sum_{n\in\N}p(n)q^n=\phi(q)^{-1}.
\]
\item For $g'\geq 2$, Kaneko and Zagier \cite{KZ} proved that $\F_{g,g'}$ is a quasimodular form of weight $6k$, inspired by works of Douglas \cite{Dou} and Dijkgraaf \cite{Dij} on mirror symmetry and conformal field theory. See \cite[Corollary A.2.17]{LZ} for a precise statement and a few explicit computations.
\end{itemize}

\subsection{New measures on partitions}

Using formulas \eqref{eq:Gen_Fun_Hurwitz} and \eqref{eq:HurwitzNumbers}, we can define a measure on $\Pfr$ with parameters $g\in\N$, $g'\in\N^*$ and $q\in(0,1)$ by
\[
\Pbb(\alpha) \propto \left(\frac{\chi_\alpha(1)}{\vert\alpha\vert!}\right)^{2-2g}K(\alpha)^{2g'-2}q^{\vert\alpha\vert}.
\]
We call it the \emph{Frobenius measure} of genera $g$ and $g'$ and parameter $q$ and denote it by $\Ffr_{g,g'}(q)$. Its partition function is simply the generating function $\F_{g,g'}(q)$. We made the choice of only considering even powers of the total content of partitions in order to avoid signed measures, but a similar construction is possible with any power of $K(\alpha)$. The Frobenius measure is related to another measure $\PH_{g,g'}(n)$ on $\Pfr_n$, that generalizes the Plancherel--Hurwitz measure introduced by Chapuy, Louf and Walsh \cite{CLW}, corresponding to the case $g=0$. It is defined by
\[
\Pbb(\alpha)\propto \left(\frac{\chi_\alpha(1)}{\vert\alpha\vert!}\right)^{2-2g}K(\alpha)^{2g'-2},
\]
and its partition function is simply $H_g(n,2g'-2).$ We have the following relationship between $\Ffr_{g,g'}(q)$ and $\PH_{g,g'}(n)$, reminiscent of the ones for the Poissonized Plancherel measure $(g=0,g'=1)$ and the $q$-uniform measure $(g=g'=1)$.

\begin{proposition}\label{prop:poissonisation}
For any $g\geq 0$, $g'\geq 1$ and any $q\in(0,1)$, if $\alpha\sim\Ffr_{g,g'}(q)$, then the distribution of $\vert\alpha\vert$ is given by
\begin{equation}
\Pbb(\vert\alpha\vert = n) \propto H_g(n,2g'-2) q^n.
\end{equation}
Besides, conditionally to $\vert\alpha\vert = n$, the distribution of $\alpha$ on $\Pfr_n$ is given by $\PH_{g,g'}(n)$.
\end{proposition}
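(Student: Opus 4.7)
The proposition is essentially a disintegration statement, and the proof is a direct computation from the definitions once the normalizing constants are recognized as the generating and Hurwitz numbers themselves.

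The plan is to start from the unnormalized density
\[
w(\alpha) = \left(\frac{\chi_\alpha(1)}{\vert\alpha\vert!}\right)^{2-2g}K(\alpha)^{2g'-2}q^{\vert\alpha\vert}
\]
and compute $\sum_{\alpha\in\Pfr} w(\alpha)$ by slicing over $\Pfr = \bigsqcup_n \Pfr_n$. On each slice $\{\alpha\vdash n\}$ the factor $q^{|\alpha|} = q^n$ is constant, so the inner sum is exactly $H_g(n,2g'-2)$ by the defining formula \eqref{eq:HurwitzNumbers}. The outer sum then equals $\F_{g,g'}(q)$ by \eqref{eq:Gen_Fun_Hurwitz}, confirming that $\F_{g,g'}(q)$ is indeed the correct normalization and proving the first assertion
\[
\Pbb(|\alpha|=n) = \frac{1}{\F_{g,g'}(q)}\sum_{\alpha\vdash n} w(\alpha) = \frac{H_g(n,2g'-2)q^n}{\F_{g,g'}(q)}.
\]

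For the conditional distribution, I would simply apply the definition of conditional probability: for $\alpha\vdash n$,
\[
\Pbb(\alpha \mid |\alpha|=n) = \frac{\Pbb(\alpha)}{\Pbb(|\alpha|=n)} = \frac{\left(\frac{\chi_\alpha(1)}{n!}\right)^{2-2g}K(\alpha)^{2g'-2}}{H_g(n,2g'-2)},
\]
where the $q^n$ factors cancel. This coincides with the defining formula of $\PH_{g,g'}(n)$.

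There is no real obstacle: the whole point of introducing $\Ffr_{g,g'}(q)$ with its particular weight is precisely so that the marginal/conditional decomposition reproduces $\PH_{g,g'}(n)$, in parallel with the two already-mentioned special cases (the $q$-uniform measure for $g=g'=1$ from Proposition~\ref{prop:phi_condition}, and the Poissonized Plancherel measure for $g=0$, $g'=1$). The only thing worth flagging is that the partition $\alpha = \emptyset$ plays no role: its contribution vanishes as soon as $2g'-2>0$ because $K(\emptyset)=0$, and in the remaining case $g'=1$ one still has $|\alpha|=0$ with weight $1$, consistent with the $n=0$ term of $\F_{g,1}(q)$.
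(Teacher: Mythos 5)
Your proof is correct and follows exactly the same route as the paper, which simply notes that the argument of Proposition~\ref{prop:phi_condition} carries over verbatim: the marginal of $\vert\alpha\vert$ comes from summing the weight over each slice $\Pfr_n$ (recognizing $H_g(n,2g'-2)$ as the defining sum \eqref{eq:HurwitzNumbers}), and the conditional law follows from the definition of conditional probability with the $q^n$ factors cancelling. Your added remark about the empty partition is a harmless extra observation not present in the paper.
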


\begin{proof}
It is the same proof as in Proposition \ref{prop:phi_condition}: the first assertion comes from the averaging over all partitions with the same size, and the second one from the definition of the conditional probability.
\end{proof}

Here is an elementary result, whose proof is identical to Proposition \ref{prop:is_it_wick}.
\begin{proposition}\label{prop:moments_frob}
For any $g\geq 0$, $g'\geq 1$ and $q\in(0,1)$, if $\alpha\sim\Ffr_{g,g'}(q)$ is a random partition and $\ell\geq 0$ is an integer, then
\begin{equation}\label{eq:moment_frobenius1}
\E\left[\vert\alpha\vert^\ell\right] = \frac{D^\ell\F_{g,g'}(q)}{\F_{g,g'}(q)}.
\end{equation}
\end{proposition}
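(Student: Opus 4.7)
The plan is to mimic directly the argument used for Proposition~\ref{prop:is_it_wick}, exploiting the fact that the operator $D=q\frac{d}{dq}$ acts diagonally on the monomial basis $\{q^n\}_{n\ge 0}$.

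First, I would unfold the expectation using the explicit form of the Frobenius measure:
\[
\E[|\alpha|^\ell] = \frac{1}{\F_{g,g'}(q)}\sum_{\beta\in\Pfr} |\beta|^\ell \left(\frac{\chi_\beta(1)}{|\beta|!}\right)^{2-2g} K(\beta)^{2g'-2}\, q^{|\beta|},
\]
where the normalization $\F_{g,g'}(q)=\sum_{\beta\in\Pfr}\bigl(\chi_\beta(1)/|\beta|!\bigr)^{2-2g}K(\beta)^{2g'-2}q^{|\beta|}$ follows by rearranging \eqref{eq:Gen_Fun_Hurwitz} and \eqref{eq:HurwitzNumbers} by size of $\beta$.

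Second, I would observe that $D(q^n)=nq^n$, hence by induction $D^\ell(q^n)=n^\ell q^n$ for every $\ell\ge 0$. The existence of the Frobenius measure on $(0,1)$ already encodes the fact that the power series defining $\F_{g,g'}(q)$ converges on the unit disk (this is standard for Hurwitz-type generating functions in the torus case, and it is implicit in the definition of $\Ffr_{g,g'}(q)$ as a probability measure). Therefore $\F_{g,g'}$ is holomorphic on $\{|q|<1\}$ and one may differentiate term by term, yielding
\[
D^\ell \F_{g,g'}(q) = \sum_{\beta\in\Pfr} |\beta|^\ell \left(\frac{\chi_\beta(1)}{|\beta|!}\right)^{2-2g}K(\beta)^{2g'-2}\,q^{|\beta|}.
\]
Dividing by $\F_{g,g'}(q)$ gives exactly \eqref{eq:moment_frobenius1}.

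There is essentially no obstacle here: the only analytic point is the justification of term-by-term differentiation, which follows from the standard fact that a power series is infinitely differentiable inside its disk of convergence. One could alternatively use Proposition~\ref{prop:poissonisation}: conditionally on $|\alpha|=n$, $\alpha\sim\PH_{g,g'}(n)$, so $\E[|\alpha|^\ell]=\sum_n n^\ell\,\Pbb(|\alpha|=n)$, and a direct computation with $\Pbb(|\alpha|=n)\propto H_g(n,2g'-2)q^n$ combined with $D^\ell q^n=n^\ell q^n$ leads to the same identity. This alternative route makes it transparent that the proposition is nothing but the usual moment-generating-style identity for distributions of the form $\Pbb(|\alpha|=n)\propto c_n q^n$.
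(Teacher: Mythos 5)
Your proof is correct and matches the paper's approach: the paper simply declares the proof ``identical to Proposition~\ref{prop:is_it_wick}'', i.e.\ the same term-by-term application of $D=q\frac{d}{dq}$ to the generating series using $D^\ell q^n=n^\ell q^n$, which is exactly what you carry out (and your conditioning alternative via Proposition~\ref{prop:poissonisation} is an equivalent reformulation). No gaps.
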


Let us comment \eqref{eq:moment_frobenius1} in a few simple cases.

\begin{itemize}
\item If $\ell=1$, we find the average size of a $\Ffr_{g,g'}(q)$-distributed random partition:
\[
\E\left[\vert\alpha\vert\right] = \frac{q\frac{d}{dq}\F_{g,g'}(q)}{\F_{g,g'}(q)} = q\frac{d}{dq}\log\F_{g,g'}(q).
\]
\item If $\ell=g=g'=1$, we retrieve a result by Okounkov \cite{Oko}:
\begin{equation}
\phi(q)\sum_{\alpha\in\Pfr} \vert\alpha\vert q^{\vert\alpha\vert} =  q\frac{d}{dq}\log\phi(q) = \frac{1}{24} + G_2(q),
\end{equation}
where $G_2(q)=\frac12\zeta(-1)+\sum_n q^n\sum_{d\vert n}d$ is an Eisenstein series.
\end{itemize}

Although it might be of independent interest to study in detail the Plan\-cherel--Hurwitz and the Frobenius measures, we will not go further in this direction. We will simply link the expectations from \eqref{eq:moment_frobenius1} to the expectations $A_{k,\ell}$ that appear in Lemma \ref{lem:finite_pow}: recall that if $\alpha$ is a $q$-uniform random partition,
\[
A_{k,\ell} = \E\left[K(\alpha)^k\vert\alpha\vert^\ell\right].
\]
It follows that for any integer $g\geq 1$,
\[
A_{2g-2,\ell} = \phi(q)\sum_{\alpha\in\Pfr} K(\alpha)^{2g-2}\vert\alpha\vert^\ell q^{\vert\alpha\vert}=\phi(q)\F_{1,g}(q)\E\left[\vert\beta\vert^\ell\right],
\]
where $\beta\sim\Ffr_{1,g}(q)$. Finally,
\begin{equation}\label{eq:A_kl}
A_{2g-2,\ell} = \phi(q)D^{\ell}\F_{1,g}(q).
\end{equation}

Now we shall state the main result of this subsection.

\begin{theorem}\label{thm:moments_F}
For any any $k\geq 0$ and $q\in(0,1)$, if $(\alpha,\beta,n)$ is a triple of independent random variables with $\alpha,\beta\sim\Ufr(q)$ and $n\sim\Gfr_1(q)$, we have
\begin{align}
\begin{split}
\E & \left[F(\alpha,\beta,n)^{2k}\right] = \\
&\sum\frac{(-1)^{k_4}(2k)!\phi(q)^2}{(2k_1)!(2k_2)!k_3!k_4!\theta(q)}D^{k_3}\F_{1,k_1+1}(q)D^{k_4}\F_{1,k_2+1}(q)D^{\frac{k_3+k_4}{2}}\theta(q),
\end{split}\label{eq:general_moment}
\end{align}
where the sum is over $k_1,k_2,k_3,k_4\geq 0$ such that $2k_1+2k_2+k_3+k_4=2k$.
\end{theorem}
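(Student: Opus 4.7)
The statement is essentially a direct assembly of tools already established in the excerpt, so the plan is mostly bookkeeping rather than discovering new identities.

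The starting point is the explicit moment formula \eqref{eq:moments_even} from Lemma \ref{lem:finite_pow}:
\[
\E\!\left[F(\alpha,\beta,n)^{2k}\right]=\sum \binom{2k}{2k_1\ 2k_2\ k_3\ k_4}(-1)^{k_4}A_{2k_1,k_3}A_{2k_2,k_4}B_{k_3,k_4},
\]
summed over $k_1,k_2,k_3,k_4\ge 0$ with $2k_1+2k_2+k_3+k_4=2k$. All that is left is to rewrite $A_{2k_i,k_j}$ and $B_{k_3,k_4}$ in terms of the operators $D^m$ acting on $\F_{1,\cdot}$ and $\theta$.

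First, I would substitute the identity \eqref{eq:A_kl} from the previous subsection,
\[
A_{2k_1,k_3}=\phi(q)\,D^{k_3}\F_{1,k_1+1}(q),\qquad A_{2k_2,k_4}=\phi(q)\,D^{k_4}\F_{1,k_2+1}(q),
\]
which comes from the Frobenius measure rewriting and is the place where the generating functions $\F_{1,g}$ enter. The factor $\phi(q)^2$ in the claimed formula is produced exactly at this step.

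Second, I would handle $B_{k_3,k_4}=\E[n^{k_3+k_4}]$ via Proposition \ref{prop:is_it_wick}: this expectation vanishes when $k_3+k_4$ is odd, and equals $D^{(k_3+k_4)/2}\theta(q)/\theta(q)$ when $k_3+k_4$ is even. In the first case the contribution to the sum is zero, and the convention that undefined fractional derivatives never appear is respected; in the second case this produces the factor $D^{(k_3+k_4)/2}\theta(q)/\theta(q)$ that appears in the claimed formula. Combining with the expansion of the multinomial coefficient $\binom{2k}{2k_1\ 2k_2\ k_3\ k_4}=\frac{(2k)!}{(2k_1)!(2k_2)!k_3!k_4!}$, one obtains exactly \eqref{eq:general_moment}.

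There is essentially no obstacle: each ingredient (the cancellation of odd powers in Lemma \ref{lem:finite_pow}, the translation of content-weighted moments into derivatives of $\F_{1,g}$ via \eqref{eq:A_kl}, and the derivative formula for the moments of $\Gfr_1(q)$ from Proposition \ref{prop:is_it_wick}) is already proved. The only care needed is to observe that terms with $k_3+k_4$ odd contribute zero, so writing the sum with $D^{(k_3+k_4)/2}\theta(q)$ is unambiguous — one may either restrict the sum to $k_3+k_4$ even or adopt the convention that such terms are understood to vanish.
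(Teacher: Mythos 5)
Your proposal is correct and follows exactly the paper's route: the stated result is obtained by substituting \eqref{eq:A_kl} and \eqref{eq:theta_moment_even} into the moment formula \eqref{eq:moments_even} of Lemma \ref{lem:finite_pow} and expanding the multinomial coefficient. The only superfluous step is your discussion of the case $k_3+k_4$ odd: the constraint $2k_1+2k_2+k_3+k_4=2k$ already forces $k_3+k_4$ to be even, so $D^{(k_3+k_4)/2}\theta(q)$ is always well defined and no convention is needed.
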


\begin{proof}
This is a direct consequence of \eqref{eq:moments_even}, \eqref{eq:moment_frobenius1}, \eqref{eq:theta_moment_even} and \eqref{eq:A_kl}.
\end{proof}

\section{Topological expansions}
\label{sec:expansions}

In this section, we combine the asymptotic expansions from Section \ref{sec:largeN} and the results from Section \ref{sec:enumeration} in order to get the topological expansion of the Yang--Mills partition function on a torus. We will first focus on a special case, the so-called \emph{chiral setting}, introduced by Gross and Taylor in \cite{GT,GT2}. It consists in replacing, in \eqref{eq:Z-YM}, the set of summation $\widehat{\U}(N)$ by $\{\alpha\in\Pfr,\ell(\alpha)\leq N\}$, and apply the dimension and Casimir formulas to the highest weight obtained from $\alpha$ by adding the right amount of zeros: $\tilde{\alpha}=(\alpha_1,\ldots,\alpha_{\ell(\alpha)},0,\ldots,0)\in\widehat{\U}(N)$. More precisely,
\[
Z_N^{\chir}(g,t) = \phi(q)^{-1}\E\left[d_{\tilde{\alpha}}^{2-2g} q_t^{\frac{2}{N}K({\alpha})} \mathbf{1}_{\{\ell(\alpha)\leq N\}}\right],
\]
where $d_{\tilde{\alpha}}$ is the dimension of the associated representation. As we treat here the case  $g=1,$ we do not need to provide an exact formula for this dimension.

Let us now prove the analog of Theorem \ref{thm:main} in the chiral setting.

\begin{proof}[Proof of Theorem \ref{thm:chiral_pf}]
Using the exact same estimates as in the proof of Theorem \ref{thm:main3}, we have
\[
Z_N^\chir(1,t) = \sum_{k=0}^{2p+1} \frac{(-t)^k}{k!N^k}\E[K(\alpha)^k] + O(N^{-2p-2}),
\]
where $\alpha \sim \Ufr(q_t).$

The odd coefficients vanish because of the identity $K(\alpha)^{2k+1}=-K(\alpha')^{2k+1}$, and the even coefficients can be computed using Proposition \ref{prop:moments_frob}. The result follows.
\end{proof}

Theorem \ref{thm:chiral_pf} has also been proved independently by Novak \cite{Nov2024}, who recovered the power series expansion from Gross and Taylor and proved that it is absolutely convergent. The fact that it is a topological expansion follows from the Riemann--Hurwitz formula \eqref{eq:RiemannHurwitz}. We claim that one can go even further, and recover a beautiful formula, due to Gross and Taylor, written as an integral over the space of ramified coverings. In order to do so, we shall borrow\footnote{We actually rather adapt them, in the sense that L\'evy defined the objects when the base space is the unit disk, but everything remains well defined if we replace it by the torus.} some constructions from \cite{Lev}. Let $X$ be a compact complex torus, i.e. a compact connected Riemann surface of genus 1. Denote by $\mathcal{R}$ (resp. $\mathcal{R}_n$) the set of ramified coverings (resp. $n$-sheeted ramified coverings) $Y\to X$ such that:
\begin{enumerate}
\item $Y$ is also a compact\footnote{Not necessarily connected!} Riemann surface,
\item All ramifications are simple.
\end{enumerate}
If $R=\{x_1,\ldots,x_{2k}\}$ is a finite subset of $X$, denote by $\mathcal{R}_{n,R}$ the set of $n$-sheeted ramified coverings $Y\to X$ with ramification locus $R$. In this case, we say that $n$ is the degree of the ramified covering $Y,$ denoted  by $\mathrm{deg}(Y)$ and the Euler characteristics of $Y$ is denoted  by $\chi(Y).$

\begin{definition}
Let $t>0$ be a fixed positive real number. 
\begin{enumerate}
\item For any $R=\{x_1,\ldots,x_{2k}\}\subset X$, we define the measure $\rho_{n,R}$ on $\mathcal{R}_{n,R}$ by
\[
\rho_{n,R} = \sum_{Y\in\mathcal{R}_n}\frac{n!}{\vert\mathrm{Aut}(Y)\vert}\delta_Y.
\]
\item Assume that $X$ is endowed with a volume form $\vol$ such that $\vol(X)=t$, and that $\Xi_t$ is the distribution of a Poisson point process on $X$ with intensity $\vol$, conditioned to only take even values. We define the measure $\rho_{n,t}$ on $\mathcal{R}_n$ by
\[
\rho_{n,t}(dY)=\int_{\mathrm{Conf}(X)}\rho_{n,R}(dY)\Xi_t(dR),
\]
where $\mathrm{Conf}(X)$ is the set of configurations, i.e. locally finite subsets of $X$.
\item We define the measure $\rho_t$ on $\mathcal{R}$ by
\[
\rho_t(dY) = \sum_{n\geq 0}\mathbf{1}_{\mathcal{R}_n}(Y)\rho_{n,t}(dY).
\]
\end{enumerate}
\end{definition}

Roughly speaking, $\rho_t$ is the uniform counting measure of isomorphism classes of ramified coverings $Y\to X$ with simple ramifications, and with a random ramification locus $R$ whose distribution is given by $\Xi_t$.

We can now rewrite the chiral partition function as an integral over the space of ramified coverings of the torus, recovering an analog\footnote{In their expression, Gross and Taylor actually integrate over a larger set of coverings, which are allowed to have what they call \emph{twists and loops} in addition to the branch points. They also have an additional factor $(-1)^i$ where $i$ denotes the number of branch points of the covering. However, in our case the number of branch points is always even, which justifies the absence of the factor $(-1)^i$.} of the equation (2.27) from \cite{GT2}.

\begin{corollary}\label{cor:integral_covering}
The chiral partition function on a compact torus of area $t>0$ admits the following expression:
\begin{equation}
Z_N^\chir(1,t) = \int_{\mathcal{R}}q_t^{\mathrm{deg}(Y)}N^{\chi(Y)}\rho_t(dY).
\end{equation}
\end{corollary}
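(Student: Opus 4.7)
The plan is to unfold the integral on the right-hand side layer by layer, using the definition of $\rho_t$, and match the resulting series with the asymptotic expansion from Theorem~\ref{thm:chiral_pf}. Starting from the decomposition $\rho_t(dY) = \sum_n \mathbf{1}_{\mathcal{R}_n}(Y)\rho_{n,t}(dY)$ and the definition of $\rho_{n,t}$, one writes, after using Fubini,
\[
\int_{\mathcal{R}} q_t^{\mathrm{deg}(Y)} N^{\chi(Y)} \rho_t(dY) = \sum_{n \geq 0} q_t^n \int_{\mathrm{Conf}(X)} \sum_{Y \in \mathcal{R}_{n,R}} \frac{n!}{|\mathrm{Aut}(Y)|} N^{\chi(Y)} \Xi_t(dR).
\]

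For the Euler characteristic, I would invoke Riemann--Hurwitz~\eqref{eq:RiemannHurwitz} in its additive (possibly disconnected) form: for any $Y \in \mathcal{R}_{n,R}$ with $|R| = 2k$ simple branch points, additivity of $\chi$ across components combined with $\chi(X) = 0$ gives $\chi(Y) = -2k$, so $N^{\chi(Y)} = N^{-2k}$ is constant on $\mathcal{R}_{n,R}$ and may be pulled out. The remaining sum over coverings is a Hurwitz number: via the correspondence~\eqref{eq:equiv_cov} between $n$-sheeted ramified coverings with fixed branch locus $R$ and $\mathfrak{S}_n$-conjugacy classes of representations of $\pi_1(X \setminus R)$ with transposition monodromy at each branch point, combined with the Frobenius character formula, one obtains
\[
\sum_{Y \in \mathcal{R}_{n,R}} \frac{n!}{|\mathrm{Aut}(Y)|} = \sum_{\alpha \vdash n} K(\alpha)^{2k} = H_1(n, 2k),
\]
which depends only on $n$ and $k = |R|/2$.

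For the Poisson bookkeeping, I would use that under $\Xi_t$ the stratum of $2k$-point configurations carries mass $t^{2k}/(2k)!$, corresponding to the symmetric product $\vol^{\otimes 2k}/(2k)!$ for a Poisson point process of intensity $\vol$ with $\vol(X) = t$. Combining everything and swapping the finite sums yields
\[
\int_{\mathcal{R}} q_t^{\mathrm{deg}(Y)} N^{\chi(Y)} \rho_t(dY) = \sum_{k \geq 0} \frac{t^{2k}}{(2k)!\,N^{2k}} \sum_{n \geq 1} H_1(n, 2k)\,q_t^n = \sum_{k \geq 0} \frac{t^{2k}}{(2k)!\,N^{2k}} \F_{1,k+1}(q_t).
\]
By Theorem~\ref{thm:chiral_pf}, every partial sum of the right-hand side agrees with $Z_N^{\chir}(1,t)$ up to error $O(N^{-2p-2})$, and the series converges absolutely (Novak's result, cited in the text, supplies this), giving the claimed equality.

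The main obstacle is the identification $\sum_Y n!/|\mathrm{Aut}(Y)| = H_1(n,2k)$: it requires the standard but careful Burnside-type computation converting $\mathfrak{S}_n$-orbits of monodromy representations (weighted by inverse stabilizer sizes) into the unweighted count of monodromy tuples, together with the Frobenius formula transforming this count into the content-sum expression. Once this ingredient is in place, the remaining steps (Riemann--Hurwitz, the Poisson symmetric-product factorization, and the rearrangement of series) are essentially formal manipulations.
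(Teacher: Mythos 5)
Your argument is essentially the paper's proof run in the opposite direction: the paper starts from Theorem \ref{thm:chiral_pf} summed to all orders (Fubini over nonnegative terms) and recognizes the resulting series as the integral, while you unfold the integral into the same series using the same three ingredients --- Riemann--Hurwitz giving $\chi(Y)=-2k$, the identification of $\sum_{Y}n!/\vert\mathrm{Aut}(Y)\vert$ with $H_1(n,2k)$, and the Poisson stratification of $\Xi_t$ into $2k$-point configurations of mass $t^{2k}/(2k)!$. The only substantive difference is that you spell out the Burnside/Frobenius derivation of the Hurwitz-number identification, which the paper takes directly from its definition \eqref{eq:HurwitzNumbers} together with the correspondence \eqref{eq:equiv_cov}; this is correct and consistent with the paper.
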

\begin{proof}
Let us start by taking the limit $p\to\infty$ in Theorem \ref{thm:chiral_pf}. By Fubini's theorem (all terms are nonnegative),
\begin{align*}
Z_N^\chir(1,t) = & \sum_{n\geq 0}q_t^n\sum_{k\geq 0}\frac{t^{2k}}{(2k)!N^{2k}}H_1(n,2k).
\end{align*}
By the Riemann--Hurwitz formula \eqref{eq:RiemannHurwitz}, a ramified covering $Y$ of the torus has genus $g$ if and only if its Euler characteristic equals minus the number of branch points: $2k=2g-2=-\chi(Y)$. Furthermore, from the definition of the Hurwitz numbers, we have for all $k\geq 0$
\[
H_1(n,2k) = \sum_{\substack{Y\in\mathcal{R}_n\\ \chi(Y)=-2k}}\frac{n!}{\vert\mathrm{Aut}(Y)\vert},
\]
so that
\[
Z_N^\chir(1,t) = \sum_{n\geq 0}q_t^n\sum_{Y\in\mathcal{R}_n}N^{\chi(Y)}\int_{\mathrm{Conf}(X)} \rho_{n,R}(dY)\Xi_t(dR).
\]
The result follows from the fact that $\mathcal{R}=\bigsqcup_n \mathcal{R}_n$.
\end{proof}

If we combine Theorem \ref{thm:main} with Theorem \ref{thm:moments_F} and replace the indices $2k$ for $k\geq 0$ by $2g-2$ for $g\geq 1$, we can also somehow interpret the asymptotic expansions of the full partition function on the torus as a topological expansion. However, the actual coefficients are much more involved, and we are unable to express the partition function as an explicit integral over random ramified coverings as in the chiral case.

\begin{theorem}\label{thm:top_exp_torus}
For any $t>0,$  the partition function $Z_N^\YM(1,t)$ admits the following expansion for any $p\geq 1$:
\begin{equation}
Z_N^\YM(1,t) = \sum_{g=1}^{p+1} \frac{a_{2g-2}(t)}{N^{2g-2}} + O(N^{-2p-2)}).
\end{equation}
The coefficients $a_{2g-2}(t)$ are given by 
\begin{equation}
a_{2g-2}(t) = \sum \frac{t^{2g-2}(-1)^{k_2}D^{k_1}\F_{1,g_1}(q_t) D^{k_2}\F_{1,g_2}(q_t)D^{\frac{k_1+k_2}{2}}\theta(q_t)}{(2g_1-2)!(2g_2-2)!k_1!k_2!},
\end{equation}
where the sum is over $g_1,g_2,\geq 1,$ $k_1,k_2\geq 0$ such that $2g_1-2+2g_2-2+k_1+k_2=2g-2$.
\end{theorem}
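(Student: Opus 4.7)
The proof is essentially a bookkeeping exercise combining two results that are already in place: the asymptotic expansion of $Z_N(q_t)$ from Theorem~\ref{thm:main3} and the explicit evaluation of the coefficients from Theorem~\ref{thm:moments_F}. The plan is to identify $Z_N^{\YM}(1,t) = Z_N(q_t)$, substitute the explicit formula for $\E[F(\alpha,\beta,n)^{2k}]$, simplify the resulting factors of $\theta(q_t)$, $\phi(q_t)^2$ and $(2k)!$ that cancel between the two expressions, and finally reindex the sum to match the stated form.

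More precisely, I would start by recalling from the introduction that the identification $Z_N^{\YM}(1,t) = Z_N(e^{-t/2}) = Z_N(q_t)$ reduces the claim to providing an asymptotic expansion of $Z_N(q_t)$ in powers of $1/N^2$. Theorem~\ref{thm:main3} gives exactly such an expansion: for any $p\geq 0$ and $N$ large enough,
\[
Z_N(q_t) = \sum_{k=0}^{p} \frac{a_{2k}(t)}{N^{2k}} + O(N^{-2p-2}),
\qquad a_{2k}(t) = \frac{t^{2k}\theta(q_t)}{(2k)!\phi(q_t)^{2}}\,\E\!\left[F(\alpha,\beta,n)^{2k}\right].
\]
The only thing left is to rewrite $a_{2k}(t)$ in a topological form.

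Next I would substitute the formula for $\E[F(\alpha,\beta,n)^{2k}]$ supplied by Theorem~\ref{thm:moments_F}. The factor $\theta(q_t)/\phi(q_t)^{2}$ coming from $a_{2k}$ cancels exactly with the factor $\phi(q_t)^{2}/\theta(q_t)$ appearing inside the sum in Theorem~\ref{thm:moments_F}, and the $(2k)!$ in the denominator of $a_{2k}$ cancels the $(2k)!$ in the numerator. After these simplifications,
\[
a_{2k}(t) = \sum \frac{t^{2k}(-1)^{k_4}\,D^{k_{3}}\F_{1,k_{1}+1}(q_t)\,D^{k_{4}}\F_{1,k_{2}+1}(q_t)\,D^{(k_{3}+k_{4})/2}\theta(q_t)}{(2k_{1})!(2k_{2})!\,k_{3}!\,k_{4}!},
\]
the sum being over $k_{1},k_{2},k_{3},k_{4}\geq 0$ with $2k_{1}+2k_{2}+k_{3}+k_{4} = 2k$.

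Finally I would reindex: write $k = g-1$ so that $k$ runs from $0$ to $p$ becomes $g$ running from $1$ to $p+1$; and set $g_{1} := k_{1}+1$, $g_{2} := k_{2}+1$, while renaming $(k_{3},k_{4})$ as $(k_{1},k_{2})$. Under this change the constraint $2k_{1}+2k_{2}+k_{3}+k_{4}=2k$ becomes $(2g_{1}-2)+(2g_{2}-2)+k_{1}+k_{2}=2g-2$ with $g_{1},g_{2}\geq 1$ and $k_{1},k_{2}\geq 0$, which produces the formula stated in the theorem. Since this is purely algebraic manipulation, there is no substantive obstacle — the only point requiring care is matching the naming of summation indices between Theorem~\ref{thm:moments_F} and the desired final form, which is easy to mismanage given the reuse of the letters $k_{1},k_{2}$.
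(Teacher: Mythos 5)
Your proposal is correct and follows exactly the paper's own route: the paper obtains Theorem~\ref{thm:top_exp_torus} precisely by combining Theorem~\ref{thm:main3} with Theorem~\ref{thm:moments_F}, cancelling the factors $\theta(q_t)/\phi(q_t)^2$ and $(2k)!$, and reindexing $2k=2g-2$, $g_i=k_i+1$. The index bookkeeping in your final step checks out, so there is nothing to add.
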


\begin{remark}
The leading term in the large $N$ limit is given by $g=1$, which corresponds to $g_1=g_2=1$ and $k_1=k_2=0$, so that
\[
a_0(t) = \F_{1,1}(q_t)^2\theta(q_t)= \frac{\theta(q_t)}{\phi(q_t)^2}.
\]
The limit \eqref{eq:lim_pf_TL} is again recovered. The next order term, corresponding to $g=2$, is already much more involved, and is obtained by adding the terms corresponding to the following values of $(g_1,g_2,k_1,k_2)$: $(2,1,0,0)$, $(1,2,0,0)$, $(1,1,2,0)$, $(1,1,0,2)$ and $(1,1,1,1)$. We get
\begin{align*}
a_2(t) = & t^2\F_{1,2}(q_t)\F_{1,1}(q_t)\theta(q_t)+t^2\F_{1,1}(q_t)D^2\F_{1,1}(q_t)D\theta(q_t)\\
& - t^2\left(D\F_{1,1}(q_t)\right)^2 D\theta(q_t).
\end{align*}
There might be an algorithmic procedure for iteratively compute the coefficients $a_{2g-2}$ for higher $g$, using topological recursion techniques in the spirit of \cite{LMS}.
\end{remark}

\begin{appendix}
\section*{Dominated convergence on countable measured spaces}

\begin{lemma}\label{lem:cv_L1}
Let $X$ be a countable set, endowed with the $\sigma$-algebra $\mathcal{P}(X)$ of all its subsets, and let $\mu:\mathcal{P}(X)\to[0,\infty]$ be its counting measure. Let $(f_n)_{n\geq 1}$ be a sequence of integrable functions on $X$ that converges pointwise to 0 on $X$. Then, it converges to 0 in $L^1(\mu)$ if and only if for any $\varepsilon>0$, there exists a measurable subset $X_\varepsilon$ such that $\mu(X_\varepsilon)<\infty$ and
\begin{equation}\label{eq:Bogachev}
\sup_n \int_{X\setminus X_\varepsilon} \vert f_n\vert d\mu < \varepsilon.
\end{equation}
\end{lemma}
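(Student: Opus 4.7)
The plan is to treat this as a variant of Vitali's convergence theorem for counting measure, noting that a subset $X_\varepsilon \subset X$ satisfies $\mu(X_\varepsilon) < \infty$ if and only if it is finite. The proof therefore reduces to controlling the tail of sums via an exhaustion by finite sets, and handling the finite part using the pointwise convergence together with the fact that on a finite set $L^1$-convergence coincides with pointwise convergence of a finite sum.

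For the ``only if'' direction, I would fix $\varepsilon>0$ and use the $L^1$ convergence to choose $N$ such that $\int_X |f_n|\, d\mu < \varepsilon$ for all $n \ge N$. For each $n < N$, the integrability of $f_n$ (i.e.\ summability of $(f_n(x))_{x \in X}$) allows me to pick a finite set $Y_n$ with $\int_{X\setminus Y_n}|f_n|\,d\mu < \varepsilon$. Setting $X_\varepsilon = Y_1 \cup \cdots \cup Y_{N-1}$ (a finite union of finite sets, hence finite) yields
\[
\sup_{n \ge 1} \int_{X \setminus X_\varepsilon} |f_n|\, d\mu < \varepsilon,
\]
which is \eqref{eq:Bogachev}.

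For the ``if'' direction, I would fix $\varepsilon > 0$ and use the hypothesis to pick a finite set $X_\varepsilon$ such that $\sup_n \int_{X \setminus X_\varepsilon} |f_n|\, d\mu < \varepsilon/2$. Then I split
\[
\int_X |f_n|\, d\mu = \int_{X_\varepsilon} |f_n|\, d\mu + \int_{X \setminus X_\varepsilon} |f_n|\, d\mu.
\]
The second term is already controlled uniformly in $n$. For the first term, since $X_\varepsilon$ is finite, we have $\int_{X_\varepsilon}|f_n|\,d\mu = \sum_{x \in X_\varepsilon}|f_n(x)|$, and by pointwise convergence this finite sum tends to $0$. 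Thus for $n$ large enough, $\int_{X_\varepsilon}|f_n|\,d\mu < \varepsilon/2$, which gives $\int_X |f_n|\, d\mu < \varepsilon$ and concludes.

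There is no real obstacle here; the only subtlety is recognizing that on a counting-measure space, ``$\mu(X_\varepsilon)<\infty$'' forces $X_\varepsilon$ to be finite, so that pointwise convergence on $X_\varepsilon$ automatically upgrades to $L^1$-convergence on $X_\varepsilon$. The criterion \eqref{eq:Bogachev} plays exactly the role of uniform integrability (or tightness) that one needs to pass from pointwise to $L^1$-convergence in the absence of a dominating function.
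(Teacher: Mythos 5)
Your proof is correct, but it follows a genuinely different route from the paper. The paper does not argue directly: it invokes \cite[Corollary 4.5.5]{Bog}, a general Vitali-type criterion asserting that $L^1$-convergence is equivalent to the tightness condition \eqref{eq:Bogachev} together with the uniform absolute continuity condition $\lim_{\mu(E)\to 0}\sup_n\int_E|f_n|\,d\mu=0$, and then observes that for the counting measure the latter is vacuous, since any $E$ with $\mu(E)<1$ must be empty. You instead give a self-contained Vitali argument: for the ``if'' direction you split $X$ into the finite set $X_\varepsilon$ (where pointwise convergence of a finite sum gives the $L^1$ estimate) and its complement (controlled uniformly by \eqref{eq:Bogachev}); for the ``only if'' direction you extract a single finite set covering the tails of the finitely many indices $n<N$ and use the $L^1$ smallness for $n\geq N$. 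Your approach buys elementarity and independence from the external reference (and from any worry about the exact hypotheses under which Bogachev's corollary applies, e.g.\ convergence in measure versus pointwise convergence), while the paper's buys brevity and places the lemma in the general uniform-integrability framework. One cosmetic point: in the ``only if'' direction you conclude $\sup_n\int_{X\setminus X_\varepsilon}|f_n|\,d\mu<\varepsilon$ from the fact that each individual integral is $<\varepsilon$; this is fine here because the integrals tend to $0$ so the supremum is attained over finitely many indices, but it is cleaner to run the argument with $\varepsilon/2$ so that the strict inequality for the supremum is immediate.
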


\begin{proof}
According to \cite[Corollary 4.5.5]{Bog}, which applies on any measured space with a positive measure, the convergence of $(f_n)$ in $L^1(\mu)$ holds if and only if \eqref{eq:Bogachev} holds true and
\[
\lim_{\mu(E)\to 0} \sup_n \int_E \vert f_n\vert d\mu = 0.
\]
If we prove that the latter holds for any countable measured space with its counting measure, then the lemma will follow.

For any $\varepsilon>0$, for any $E\subset X$ such that $\mu(E)<1$, we have $E=\varnothing$ and it is clear that
\[
\sup_n \int_{\varnothing} \vert f_n\vert d\mu = 0 < \varepsilon.
\]
Hence, the condition is satisfied.
\end{proof}
\end{appendix}
\bibliographystyle{imsart-number} 
\bibliography{main-aop}       


\end{document}